\newcommand {\be}{\begin{equation}}
\newcommand {\ee}{\end{equation}}
\newcommand {\bea}{\begin{eqnarray}}
\newcommand {\eea}{\end{eqnarray}}
\newcommand{\cl}{\mathcal}
\newtheorem{proposition}{Proposition}
\newtheorem{definition}{Definition}
\newtheorem{lemma}{Lemma}
\begin{document}

\title{Classifying quantum data by dissipation}

\author{Jeffrey Marshall}
\affiliation{Department of Physics and Astronomy, and Center for Quantum Information
Science \& Technology, University of Southern California, Los Angeles,
CA 90089-0484}

\author{Lorenzo Campos Venuti}

\affiliation{Department of Physics and Astronomy, and Center for Quantum Information
Science \& Technology, University of Southern California, Los Angeles,
CA 90089-0484}

\author{Paolo Zanardi}

\affiliation{Department of Physics and Astronomy, and Center for Quantum Information
Science \& Technology, University of Southern California, Los Angeles,
CA 90089-0484}

\begin{abstract}
We investigate a general class of dissipative quantum circuit capable of computing arbitrary Conjunctive Normal Form (CNF) Boolean formulas.
In particular, the clauses in a CNF formula define a local generator of Markovian quantum dynamics which acts on a network of qubits. Fixed points of this dynamical system encode the evaluation of the CNF formula.
The structure of the corresponding quantum map partitions the Hilbert space into sectors, according to decoherence-free subspaces (DFSs) associated with the dissipative dynamics.
These sectors then provide a natural and consistent way to  classify quantum data (i.e. quantum states).  
Indeed, the attractive fixed points of the network allow one to learn the sector(s) for which some particular quantum state is associated.
We show how this structure can be used to dissipatively prepare quantum states (e.g. entangled states), and outline how it may be used to generalize certain classical computational learning tasks.
\end{abstract}
\maketitle
\section{Introduction}
It has been shown that the dissipative model of quantum computation, described by Ref.~\cite{verstraete2009quantum}, is equivalent in computational power to the gate model \cite{verstraete2009quantum,jens-martin}.
In fact, since there is some built-in robustness to errors within this framework of dissipative quantum circuits \cite{verstraete2009quantum,dissi_2nd_order:Zanardi2016,DGM}, it provides a potential route to large-scale universal quantum computation. 
Moreover, since quantum states can be constructed to be  attractive fixed points of such a dissipative dynamical system, state preparation is a very natural application of these circuits \cite{kraus-prep}.
Another key area where quantum technology may prove to be beneficial is the field of machine learning \cite{Bernstein-Vazirani-learningTheory,HHL,top-data-analysis,q-hopfield}, several techniques of which actively exploit dissipative quantum dynamics \cite{monras-hmm,schuld-q-walk,quest_qnn,open-hopfield,q-hopfield2}.

In this paper we add to the growing list of dissipation assisted techniques for use in quantum computing. As such, this provides yet another example where engineered dissipation can be understood as a resource. In particular we describe a  dissipative quantum network capable of computing Conjunctive Normal Form (CNF) [and also Disjunctive Normal Form (DNF)] formulas, which connects all three of the aforementioned topics of interest: computation, state preparation, machine learning. 
Formulas of this type are of particular interest in the context of computational learning theory \cite{kearns-vazirani}.

We achieve this by introducing a local Lindbladian which defines the dynamics of a quantum system. The associated fixed points of this dynamics encode the evaluation of the clauses in the CNF formula. Recently similar techniques have been used to efficiently  transfer quantum states \cite{dissi-state-transfer}.

The circuit construction we introduce preserves quantum coherence between `classical' states (computational basis states) which have the same evaluation on the clauses in the CNF formula.
As such, this provides a  natural way to categorize quantum data, i.e., according to the resulting subspace partitioning of the Hilbert space, upon evolving the dissipative network. These partitions are in fact decoherence-free subspaces (DFSs) of the underlying dissipative dynamics \cite{Zanardi:97c,Lidar:1998fk}.
As will be demonstrated in Sect.~\ref{sect:dissi-classifier}, this level of classification can be particularly useful from the perspective of learning theory.

The inspiration for this work comes partly from the classical Hopfield recurrent neural network.
In this model, data is classified according to a set of `memories' or `patterns', to which the network dynamically evolves towards, i.e. attractive fixed points of the dynamics.
In this manner, the space of classical bit strings is partitioned according to the set of memory states.
In our model, which is similar in spirit, data is classified according to the fixed points of a dissipative quantum evolution, and the full space is partitioned into sectors according to DFSs.
Our model however in fact goes further in the sense that it can in addition classify quantum data, i.e. superpositions of `classical' states.

Based on the model we introduce, we generalize certain classical notions of data classification, providing a consistent framework from which to classify quantum data. We provide a basic outline as to how this can be used in the context of computational learning theory, where the data is quantum in nature.

Another use of this partitioning of the Hilbert space is in the  preparation of quantum states -- such as entangled states -- by dissipation alone. This technique is somewhat different to previous dissipative state preparation techniques, such as described in Refs.~\cite{kraus-prep,DGM}, which effectively use dissipation to enact the desired unitary. Our method relies on finding a CNF formula which will split the space into DFSs containing the desired states. In fact, one could prepare an ensemble of quantum states in this manner. 
We also demonstrate how to prepare quantum Probably Approximately Correct (PAC) states \cite{Valiant:PAC,qpac-bshouty-jackson}.

We discuss and bound errors within this network, which shows it actually  benefits from the engineered dissipation being as strong as possible, i.e., we are working in the strongly dissipative regime.

After briefly introducing some well known results, we will discuss our general framework, and two propositions from which all of the applications discussed above are essentially special cases. Indeed, the possibility for additional techniques arising from this methodology is promising.

\subsection{Background and Notation \label{background-sect}}

Consider the set of all bit strings of length $n$, $\cl{X} := \{0,1\}^n$.
We denote the binary variables associated with $x\in \cl{X}$ by $x_i \in \{0,1\}$, where $x = (x_1,\dots , x_n)$.

A CNF formula $C$  is a conjunction ($\land$) of clauses, where each of the clauses are disjunctions ($\lor$) of literals, $l$.
Each literal $l$ is either a variable $x_j$ or its negation $\neg x_j$, for some $1\le j \le n$.
For example, $C = C_1  \land \dots \land C_N$, with $C_i = l^{(i)}_1  \lor \dots \lor l_{k_i}^{(i)}$, where $k_i \in [n]$. We will throughout use the notation $[n]:=\{1,2,\dots , n\}$.

We state some standard results pertaining to CNF formulas and Boolean functions (see e.g. Ref.~\cite{kearns-vazirani}):
\begin{itemize}
\item[i)] Any Boolean function, $f:\cl{X} \rightarrow \{0,1\}$ can be represented by a CNF formula (although the number of clauses may be exponential in $n$).
\item[ii)] Any function $f:\cl{X} \rightarrow \{0,1\}^m$ can be represented using $m$ CNF formulas; one formula for each bit of $f(x)$.
\item[iii)] A clause $C_i$  of a CNF formula containing $k_i$ literals can be represented, using $O(k_i) = O(n)$ additional variables, by an equivalent, equisatisfiable CNF formula where each of the clauses contain at most 3 literals. Such a formula is also known as a `3-CNF'.
\end{itemize}

Throughout we will make use of the above three points, and as such will focus much of our attention on computing Boolean functions ($m=1$) with a 3-CNF representation, $C$. We will refer to the explicit evaluation of $C$ on some particular $x\in \cl{X}$, using the notation $C(x)\in\{0,1\}$.

From the perspective of machine learning, one may be directly interested in the clauses themselves since these can be thought of as the relevant (abstract) `features' being used by, for example, a machine learning algorithm to classify data. 
Imagine the task of learning a Boolean function $f$ according to some particular algorithm. 
Each time the algorithm makes an update, it can be thought of as changing the clauses in a CNF formula relating to the estimation of $f$. 
This induces a partitioning over the space $\mathcal{X}$, according to the following definition.

\begin{definition} Let $C = \land_i C_i$ be a 3-CNF formula over $\mathcal{X}$. Then,
$x,y \in \mathcal{X}$, belong to the same `partition', or `sector', of $\cl{X}$ iff $C_i(x) = C_i(y),\, \forall\, i$. 

\label{def1}
\end{definition}

For a CNF formula with $N$ clauses, this defines a partitioning of the space $\cl{X}$ into at most $2^N$ sectors, i.e. according to every possible set of features (clause evaluations) in the data.
Note that $x,y$ belonging to the same sector of course implies $C(x)=C(y)$, but the converse will not necessarily be true.

We will upgrade these general notions to define partitions over the Hilbert space, where it will be shown that Def.~\ref{def1} provides a natural starting point for defining quantum data classification by dissipation.
Practically speaking, one is often interested only in the evaluation of $C$ itself, regardless of the individual clauses.
We will in a similar manner extend the definition of the Boolean formula $C$ over classical bit strings, and give it a meaning over the Hilbert space, allowing us to write such quantities as $C(\psi)$ where $|\psi \rangle$ is a quantum state.

Since the clauses are in this context the most fundamental objects, we will first look at the `quantum evaluation' of individual clauses, $C_i(\psi)$, and then generalize this to computing 3-CNF formulas over the Hilbert space, using dissipation.

\section{Theory \label{theory-sect}}
We map the above into the quantum realm, where instead of using $n$ binary variables to encode data, one considers a system of $n$ qubits. The Hilbert space $\mathcal{H}_{\mathcal{X}} = \mathrm{span}\{|x\rangle \}_{x\in \mathcal{X}}\cong \mathbb{C}_2^{\otimes n} $ is therefore of dimension $2^n$. Here $|x\rangle := \otimes_{i=1}^n |x_i\rangle$, with $|0\rangle,|1\rangle$  eigen-states of the Pauli $z$ operator $\sigma^z:=|1\rangle \langle 1| - |0\rangle \langle 0|$.
We represent an arbitrary pure state in $\mathcal{H}_{\mathcal{X}}$ as $|\psi\rangle = \sum_{x \in \mathcal{X}} a_x |x\rangle$, for complex coefficients (amplitudes) $a_x$ which are normalized $\sum_{x \in \mathcal{X}} |a_x|^2=1$.
Occasionally we will refer explicitly to density operators $\rho \in \mathrm{L}(\mathcal{H}_{\mathcal{X}})$ which are Hermitian, positive-semidefinite, trace one, linear operators acting over the Hilbert space.

\subsection{Dissipative evaluation of 3-CNF clause \label{3sat_clause}}
Central to the dissipative network which is  to be described, is the ability to dissipatively evaluate a clause of 3-variables, i.e., $C = l_1 \lor l_2 \lor l_3$, where the $l_i$ are either $x_k$ or $\neg x_k$, for some $k \in [n]$. That is, given some $|x\rangle \in \mathcal{H}_{\mathcal{X}}$, with $x\in \cl{X}$, we will compute $C(x)$.

Let us assume the only three bits of $x$ involved in $C$ are the ${i_1},{i_2},{i_3}$-th bits, where $i_j \in [n]$. Without loss of generality we will assume these three bits are distinct.
In the quantum setting, this corresponds to three qubits.
One can construct a small dissipative network, coupling these three qubits to an ancilla qubit, $a$, via a 4-local Lindbladian \cite{Lindblad:76,Gorini:76}, as in Fig.~\ref{3cnf-clause-fig}. 

We define this Linbladian 
\begin{equation}
    \mathcal{L}\boldsymbol{\cdot} = \gamma \left( L \boldsymbol{\cdot} L^\dagger - \frac{1}{2}\{L^\dagger L,\boldsymbol{\cdot} \} \right)
    \label{eq:lindblad}
\end{equation}
by a single Lindblad (jump) operator which acts on these 4 qubits as
\begin{equation}
\begin{split}
& L = \, \Pi^\neg \otimes \sigma^-, \\
& \Pi^\neg:=|\neg l_1, \neg l_2, \neg l_3\rangle \langle \neg l_1 , \neg l_2, \neg l_3|.
\end{split}
\label{or-jump}
\end{equation}
The second term in the tensor product of $L$ acts on the ancilla qubit $a$ \footnote{{$\sigma^- = |0\rangle \langle 1|$ is defined in the eigen-basis of $\sigma^z = |1\rangle \langle 1 | - |0\rangle \langle 0|$.}}, and $\gamma >0$ is the strength of the dissipation.
The initial state of the ancilla qubit is assumed to be $|a\rangle = |1\rangle$.
The notation should be interpreted as follows. If for example, $C = x_i \lor \neg x_j \lor x_k$, the projector will be given by $\Pi^\neg = |0,1,0\rangle \langle 0,1,0|$, acting on qubits $i,j,k$.

The projector $\Pi^\neg$ projects out any state not of the form $|l_1=0,l_2=0,l_3=0\rangle$, i.e., the case in which  the clause $C$ evaluates to 0. 
In this case, the ancilla state is accordingly flipped to $|0\rangle$ by $\sigma^-$.

\begin{figure}
\vspace{0.5cm}
\includegraphics[scale=0.4]{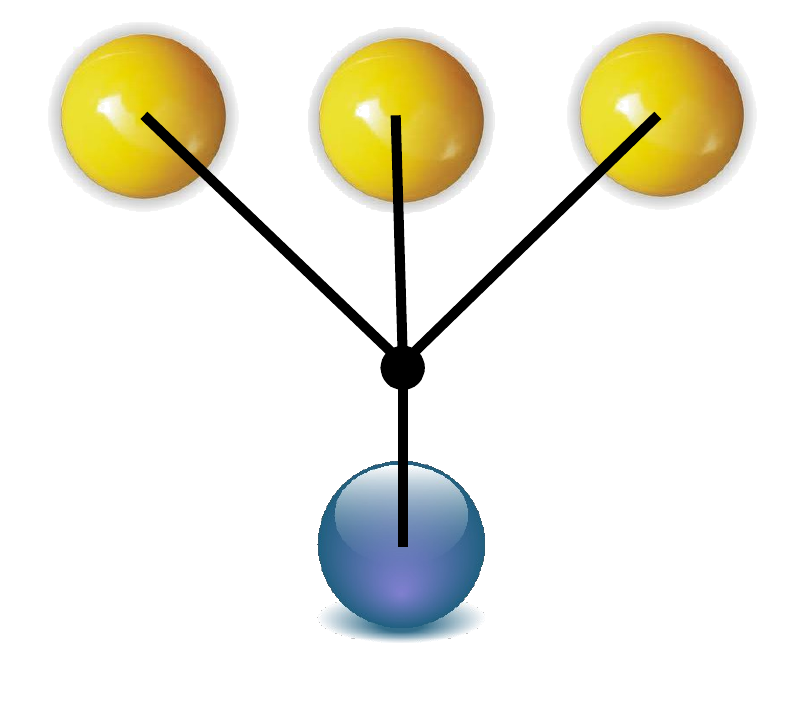}
\caption{Three input qubits (yellow, top), coupled dissipatively to an ancilla qubit (blue, bottom), via a 4-local Lindbladian (represented by the four solid lines, and the solid dot in the center).}
\label{3cnf-clause-fig}
\end{figure}

We make the following claim.

\begin{proposition} Let $\mathcal{E}_t := e^{t \mathcal{L}}$ be the evolution operator, under Lindbladian $\cl{L}$ defined by single Lindblad operator Eq.~(\ref{or-jump}). Then the evolution of a matrix element $|x\rangle \langle y| \otimes |1\rangle \langle 1|,$ where $x,y \in \mathcal{X}$, and the second term in the tensor product refers to the ancilla qubit $a$, is
\begin{widetext}
\begin{equation}
\begin{split}
\mathcal{E}_t \left( |x\rangle \langle y| \otimes |1\rangle \langle 1|\right) =
 |x\rangle \langle y | \otimes
\left\{
\begin{array}{ll}
 e^{-\gamma t}|1\rangle \langle 1| + (1-e^{-\gamma t})|c\rangle \langle c| &\,\, \text{if } C(x) = c=C(y) \\
 e^{-\gamma t/2}|1\rangle \langle 1| &\,\, \text{if } C(x) \neq C(y).
\end{array}\right.
\end{split}
\label{eq:or-evo}
\end{equation}
\end{widetext}

Note, if $C(x) = 1 = C(y)$, there is strictly no evolution, since in this case, by construction, $L|x,1\rangle = L|y,1\rangle = 0$.

\label{prop1}
\end{proposition}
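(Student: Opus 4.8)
The plan is to exploit the fact that the jump operator $L=\Pi^\neg\otimes\sigma^-$ acts as a scalar on the data register and only moves population within the ancilla. The single identity that drives everything is that, since $\Pi^\neg$ projects onto the unique computational configuration of the three clause qubits for which all literals are false, one has $\Pi^\neg|x\rangle=(1-C(x))|x\rangle$ for every basis string $x\in\mathcal{X}$ (and likewise $\langle y|\Pi^\neg=(1-C(y))\langle y|$), because $C(x)=0$ precisely when those three bits are in that configuration. First I would record the elementary ancilla actions $\sigma^-|1\rangle=|0\rangle$, $\sigma^-|0\rangle=0$, and $L^\dagger L=\Pi^\neg\otimes|1\rangle\langle 1|$.

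Next I would substitute the seed operator $\rho_0:=|x\rangle\langle y|\otimes|1\rangle\langle 1|$ into the generator (\ref{eq:lindblad}) term by term. Using the identity above, the pumping term gives $L\rho_0 L^\dagger=(1-C(x))(1-C(y))\,|x\rangle\langle y|\otimes|0\rangle\langle 0|$, while the anticommutator yields $-\tfrac{1}{2}\{L^\dagger L,\rho_0\}=-\tfrac{1}{2}\big[(1-C(x))+(1-C(y))\big]\,|x\rangle\langle y|\otimes|1\rangle\langle 1|$. The key structural observation is then that the data-register operator $|x\rangle\langle y|$ is \emph{inert} under $\mathcal{L}$ — it is only rescaled, never rotated — so $\mathcal{L}$ leaves invariant the (at most) two-dimensional operator subspace $\mathrm{span}\{|x\rangle\langle y|\otimes|1\rangle\langle 1|,\,|x\rangle\langle y|\otimes|0\rangle\langle 0|\}$. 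A one-line check shows $\mathcal{L}(|x\rangle\langle y|\otimes|0\rangle\langle 0|)=0$, i.e. the ancilla state $|0\rangle$ is dark, which closes this subspace.

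It then remains to exponentiate $\mathcal{L}$ restricted to this subspace, which splits into three cases according to $C(x),C(y)$. When $C(x)=C(y)=1$ both prefactors $(1-C(\cdot))$ vanish, so $\mathcal{L}\rho_0=0$ and the state is frozen (this is exactly the remark following the statement). When $C(x)=C(y)=0$ the restricted generator acts on the ancilla populations $(p,q)$ of $|1\rangle\langle 1|,|0\rangle\langle 0|$ as the linear system $\dot p=-\gamma p$, $\dot q=\gamma p$; solving with $p(0)=1,q(0)=0$ gives $p=e^{-\gamma t}$, $q=1-e^{-\gamma t}$, which is the first branch with $c=0$. When $C(x)\neq C(y)$ the pumping prefactor $(1-C(x))(1-C(y))$ is zero while the anticommutator prefactor equals one, so $\rho_0$ is simply an eigenoperator of $\mathcal{L}$ with eigenvalue $-\gamma/2$, giving $\mathcal{E}_t\rho_0=e^{-\gamma t/2}\rho_0$, the second branch. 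Collecting the three cases reproduces (\ref{eq:or-evo}).

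I do not anticipate a genuine obstacle: once the identity $\Pi^\neg|x\rangle=(1-C(x))|x\rangle$ is in hand the problem collapses to a block-diagonal $2\times 2$ linear system, and the only care needed is bookkeeping the three cases and noting that the off-diagonal coherences with $C(x)\neq C(y)$ decay at half the rate ($\gamma/2$) of the relaxation toward the dark state ($\gamma$). If one preferred to avoid the explicit ODE, one could instead diagonalize the $2\times 2$ restriction of $\mathcal{L}$ and read off the matrix exponential directly.
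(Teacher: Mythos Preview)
Your proposal is correct and follows essentially the same route as the paper: compute the action of $\mathcal{L}$ on $|x\rangle\langle y|\otimes|1\rangle\langle 1|$, observe that $|x\rangle\langle y|\otimes|0\rangle\langle 0|$ is annihilated by $\mathcal{L}$, and exponentiate within the resulting closed two-dimensional block. The only cosmetic differences are that you package the case analysis into the single scalar identity $\Pi^\neg|x\rangle=(1-C(x))|x\rangle$ and explicitly solve the $2\times 2$ ODE, whereas the paper enumerates the three cases for $L|x,1\rangle$ and $L^\dagger L|x,1\rangle$ directly and then simply declares the exponentiated form ``clear.''
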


\begin{proof}
Consider a disjunction of three literals, $C = l_1 \lor l_2 \lor l_3$, where $l_j$ is either $x_{i_j}$, or $\neg x_{i_j}$, for some $i_j \in [n]$. That is, $C(x)$, for $x\in\cl{X}$, is fully determined by the (distinct) bits $i_j$ in $x$, for $j=1,2,3$.

Let $\mathcal{L}$ be a Lindbladian defined by Eqs.~(\ref{eq:lindblad}), (\ref{or-jump}) with the Lindblad operator $L$ acting on (distinct) qubits $i_1,i_2,i_3$, and an ancilla qubit.

Let $x,y \in \cl{X}$. Then

\begin{equation}
\begin{split}
& \cl{L}|x\rangle \langle y| \otimes |1\rangle \langle 1| = \\
& |x\rangle \langle y| \bigotimes 
\left\{ \begin{array}{ll}
  \gamma(|0\rangle \langle 0| - |1\rangle \langle 1|) & \text{if } C(x) = 0 = C(y)\\
-\frac{\gamma}{2}|1\rangle\langle 1| & \text{if } C(x) \neq C(y) \\
0 & \text{if }  C(x) = 1 = C(y)
\end{array} \right.
\end{split}
\end{equation}

The conditions on the RHS of the above all come directly from the form of $L$:
\begin{itemize}
\item[i)] If $C(x) = 0 = C(y)$, then $x_{i_j} = \neg l_j = y_{i_j}$, for $j=1,2,3$. Hence, $L|x,1\rangle =  |x,0\rangle$, and $L|y,1\rangle = |y,0\rangle$. Similarly, $L^\dagger L |x,1\rangle = |x,1\rangle$, and $L^\dagger L |y,1\rangle = |y,1\rangle$.
\item[ii)] If  $C(x) \neq C(y)$, either $L|x,1\rangle = 0$ and $L^\dagger L|y,1\rangle = |y,1\rangle$, \textit{or} $L^\dagger L|x,1\rangle = |x,1\rangle$ and $L|y,1\rangle=0 $.
\item[iii)] If $C(x) = 1 = C(y)$, then $\exists j$ such that $x_{i_j} = l_j$, and similarly for $y$. Hence $L|x,1\rangle = 0 = L |y,1\rangle$.
\end{itemize}

Moreover, since
\begin{equation}
\cl{L}|x\rangle \langle y| \otimes |0\rangle \langle 0| = 0,
\label{eq:L0}
\end{equation}
the form of Eq.~(\ref{eq:or-evo}) is clear.
\end{proof}

We provide some additional comments regarding this result:
\begin{itemize}
\item[1)] For input state $|x\rangle \langle x|$ (with $x\in \cl{X}$), in order to guarantee with probability of $1-\epsilon$ of obtaining result $C(x)$ upon measuring the ancilla qubit, it is sufficient to evolve the network for time $t > \frac{1}{\gamma}\log \frac{1}{\epsilon}$.

That is, in order to guarantee
\begin{equation}
\langle C(x) | \mathrm{Tr}_\cl{X}\cl{E}_t\left( \rho_x \right) |C(x)\rangle \ge 1-\epsilon,
\end{equation} 
one requires $t > \frac{1}{\gamma}\log \frac{1}{\epsilon}$,
where $\rho_x:=|x\rangle \langle x| \otimes |1\rangle \langle 1|$, and $\mathrm{Tr}_\cl{X}$ is the partial trace, tracing out $\cl{H}_\cl{X}$.

 \item[2)] Coherences $|x\rangle \langle y|$ ($x\neq y$) are preserved, unless $C(x) \neq C(y)$. 
 
\item[3)] If one wishes to compute a clause of just one or two literals (instead of three), the construction and general result is exactly the same except the projector $\Pi^\neg$ of the Lindblad operator (\ref{or-jump}) acts on the appropriate one or two qubits respectively.

\item[4)] One can compute a disjunctive normal form (DNF) clause $l_1 \land l_2 \land l_3$ in a similar manner by replacing $\Pi^\neg$ by $\Pi = |l_1,l_2,l_3\rangle \langle l_1,l_2,l_3|$ and $\sigma^-$ by $\sigma^+$ in Eq.~(\ref{or-jump}). In this case the ancilla qubit should be initialized as $|0\rangle$. As such, all of our subsequent results can be equivalently phrased through DNF formulas.

 \end{itemize}
 
 Point 2) in particular implies the existence of decoherence-free subspaces \cite{Zanardi:97c,Lidar:1998fk}, defined according to
 \begin{equation}
 \mathrm{DFS}_c := \mathrm{span}\left\{ |x \rangle\,:\, C(x)=c  \right\},
 \label{eq:DFS1}
 \end{equation}
 for $c=0,1$.
 That is, for $|\psi\rangle \in \mathrm{DFS}_c$, one has $\mathrm{Tr}_a \left[\mathcal{E}_t (|\psi \rangle \langle \psi | \otimes |1\rangle \langle 1|)\right] = |\psi \rangle \langle \psi |$, where the partial trace $\mathrm{Tr}_a$ traces out the ancilla qubit $a$.

The Hilbert space is partitioned in accordance with this observation:
 given an arbitrary quantum input $\rho \in \mathrm{L}(\mathcal{H}_{\mathcal{X}})$ to the dissipative network, the quantum map $\mathcal{E}_t$ will evolve to the fixed point 
\begin{equation}
\lim_{t\rightarrow \infty}\mathcal{E}_t\left(\rho \otimes |1\rangle\langle 1| \right) = \sum_{c\in\{0,1\}} \Pi_c \rho \Pi_c \otimes |c\rangle \langle c|
\end{equation}
where $\Pi_c:=\sum_{x:C(x)=c}|x\rangle \langle x|$ is a projector onto $\mathrm{DFS}_c$ (note $\sum_{c\in \{0,1\}} \Pi_c = \mathbb{I}$).

That is, the Hilbert space $\cl{H}_\cl{X}$ contains two `coherent sectors' (DFSs) defined by $C$. States belonging to these sectors are preserved under map $\cl{E}_t$, and coherences between the sectors decay away exponentially as $e^{-\gamma t/2}$.

This simple dissipative network and the resulting decoherence-free subspaces can therefore be used to not only evaluate the clause $C(x)$ for $x\in \cl{X}$.
One can now give meaning to and compute quantities such as $C(\psi)$ where $|\psi \rangle \in \mathrm{DFS}_c$. In particular in this case the network will output $C(\psi)=c$.
We will provide a more rigorous definition below in Sect.~\ref{sect:full-network}.

We lastly comment, that in the infinite time limit, upon tracing out the ancilla qubit, the state of the system $\cl{H}_\cl{X}$, is the same as under a purely dephasing Lindbladian defined by jump operator $O=\sum_x C(x)|x\rangle \langle x |$. 
In particular, our scheme (in the long-time limit) is equivalent to  measurement of the observable $O$. That is, the engineered dissipation can be interpreted as  \textit{performing} the required measurement on the system.

\subsection{Dissipative evaluation of arbitrary 3-CNF \label{sect:full-network}}

Consider now the more general case where one wishes to evaluate a 3-CNF formula consisting of $N$ clauses $C = \land_{i=1}^N C_i$. We show how to construct a dissipative network to achieve this, which uses $N$ ancilla qubits, each of which is coupled to at most 3 of the input qubits, and evolves in a similar manner as Eq.~(\ref{eq:or-evo}). 
The full Hilbert space, of the `input' qubits, and  ancilla qubits is therefore of the form $\cl{H} = \cl{H}_\cl{X} \otimes \cl{H}_a$, where $\cl{H}_a$ is of dimension $2^N$.
We provide an illustration of this set-up in Fig.~\ref{3cnf-fig}.

The structure is now much richer, and we will see the Hilbert space is actually partitioned into (at most) $2^N$ DFSs, according to the output of each of the $N$ clauses.

The full Lindbladian acting on the system of $n+N$ qubits is now given by $\mathcal{L} = \sum_{i=1}^N\mathcal{L}_i$, where $\cl{L}_i$ is a Lindblad generator defined by a single jump operator, of the form Eq.~(\ref{or-jump}), which acts on the qubits defined by the $i$-th clause in $C$.

\begin{proposition} For $\mathcal{E}_t:=e^{t\mathcal{L}}$, with $\mathcal{L}$ defined as above, the infinite time evolution of matrix element $|x\rangle \langle y | \otimes |1\rangle \langle 1|$, with $x,y\in\mathcal{X}$, is given by

\begin{equation}
\begin{split}
&\lim_{t \rightarrow \infty} \mathcal{E}_t \left( |x\rangle \langle y| \otimes |1\rangle \langle 1|^{\otimes N}\right) = \\
& 
\left\{
\begin{array}{ll}
|x\rangle \langle y | \bigotimes_{i=1}^N |c_i \rangle \langle c_i |  & \text{if } \forall i\, C_i(x) = c_i = C_i(y) \\
 0 & \text{if } \exists\, i\,\, \text{s.t. } C_i(x) \neq C_i(y).
\end{array}\right.
\end{split}
\label{full-evo}
\end{equation}

Moreover, under a finite time evolution of input state $|x\rangle$ ($x\in \mathcal{X}$), upon measuring the $N$ ancilla qubits, to successfully obtain $C_i(x),\forall i$ with probability $1-\epsilon$, it is sufficient to evolve the network for time $t \sim O(\frac{1}{\gamma}\log \frac{N}{\epsilon})$.
\label{prop2}
\end{proposition}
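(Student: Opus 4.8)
The plan is to reduce everything to the single-clause result of Proposition~\ref{prop1} by showing that the full evolution factorizes across the $N$ ancillas. Since $\mathcal{L} = \sum_{i=1}^N \mathcal{L}_i$, the first step is to argue that the generators $\mathcal{L}_i$ mutually commute, so that $\mathcal{E}_t = e^{t\mathcal{L}} = \prod_{i=1}^N e^{t \mathcal{L}_i}$ is a product of commuting channels, each of which is exactly the single-clause map analyzed in Proposition~\ref{prop1} acting on its own ancilla $a_i$.

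The main obstacle is precisely establishing this commutativity, since distinct clauses $C_i,C_j$ may share input qubits, so the jump operators $L_i,L_j$ are not supported on disjoint subsystems. The resolution is to observe that each $L_i = \Pi^\neg_i \otimes \sigma^-_{a_i}$ acts on the (possibly overlapping) input register only through the projector $\Pi^\neg_i$, which is diagonal in the computational basis, while the lowering part $\sigma^-_{a_i}$ lives on a dedicated ancilla $a_i$ disjoint from all others. Hence for $i \neq j$ every operator in $\{L_i,L_i^\dagger\}$ commutes with every operator in $\{L_j,L_j^\dagger\}$: the input-register factors commute because they are simultaneously diagonal, and the ancilla factors commute because they act on different qubits. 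Writing out $\mathcal{L}_i\mathcal{L}_j - \mathcal{L}_j\mathcal{L}_i$ term by term, each product such as $L_j L_i \rho L_i^\dagger L_j^\dagger$ can be reordered using these cross-commutations, and one checks that all terms cancel, giving $[\mathcal{L}_i,\mathcal{L}_j]=0$.

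With the factorization $\mathcal{E}_t = \prod_i e^{t\mathcal{L}_i}$ in hand, I would apply Proposition~\ref{prop1} factor by factor to the matrix element $|x\rangle\langle y| \otimes |1\rangle\langle 1|^{\otimes N}$. Because every $L_i$ reads the input register only through the diagonal $\Pi^\neg_i$, the input factor $|x\rangle\langle y|$ is never placed in superposition; it is merely read, and is preserved throughout. Each factor $e^{t\mathcal{L}_i}$ then evolves its own ancilla $a_i$ exactly as in Eq.~(\ref{eq:or-evo}), depending only on the pair $(C_i(x),C_i(y))$, and acts trivially on the remaining ancillas. This yields, at finite $t$, the product form $|x\rangle\langle y| \bigotimes_{i=1}^N \rho^{(i)}_{a_i}(t)$, where $\rho^{(i)}_{a_i}(t)$ equals $e^{-\gamma t}|1\rangle\langle 1| + (1-e^{-\gamma t})|c_i\rangle\langle c_i|$ when $C_i(x)=c_i=C_i(y)$ and equals $e^{-\gamma t/2}|1\rangle\langle 1|$ when $C_i(x)\neq C_i(y)$. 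Taking $t\to\infty$ gives the two cases of Eq.~(\ref{full-evo}): if $C_i(x)=c_i=C_i(y)$ for all $i$, each factor converges to $|c_i\rangle\langle c_i|$, whereas if some clause disagrees the corresponding tensor factor decays as $e^{-\gamma t/2}\to 0$ and annihilates the entire operator.

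For the finite-time bound I would specialize to the pure input $|x\rangle$, i.e.\ $y=x$, so that $C_i(x)=C_i(y)$ for every $i$ and ancilla $a_i$ sits in the state $e^{-\gamma t}|1\rangle\langle 1| + (1-e^{-\gamma t})|c_i\rangle\langle c_i|$ with $c_i=C_i(x)$. Measuring $a_i$ in the computational basis returns $c_i$ with probability $1$ when $c_i=1$ and with probability $1-e^{-\gamma t}$ when $c_i=0$. As the ancillas are unentangled, the probability of reading off the entire string $(C_1(x),\dots,C_N(x))$ correctly factorizes and is bounded below by $(1-e^{-\gamma t})^N \ge 1 - N e^{-\gamma t}$ via Bernoulli's inequality. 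Demanding this exceed $1-\epsilon$ gives $N e^{-\gamma t} \le \epsilon$, i.e.\ $t \ge \tfrac{1}{\gamma}\log\tfrac{N}{\epsilon}$, which is the claimed $O(\tfrac{1}{\gamma}\log\tfrac{N}{\epsilon})$ scaling.
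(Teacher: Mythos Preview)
Your proposal is correct and follows essentially the same route as the paper: establish $[\mathcal{L}_i,\mathcal{L}_j]=0$, factorize $\mathcal{E}_t=\prod_i e^{t\mathcal{L}_i}$, apply Proposition~\ref{prop1} ancilla by ancilla to obtain the product form, and then bound the success probability via $(1-e^{-\gamma t})^N \ge 1-\epsilon$. Your justification of the commutativity (diagonal input-register projectors plus disjoint ancilla supports) is in fact more explicit than the paper's, and your use of Bernoulli's inequality for the finite-time bound is a slightly cleaner variant of the paper's logarithmic estimate, but the argument is otherwise identical.
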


\begin{proof}
Given a 3-CNF, $C = \land_{i=1}^N C_i$, we construct a 4-local Lindbladian $\mathcal{L} = \sum_{i=1}^N \mathcal{L}_i$, where $\mathcal{L}_i$ is defined for each clause, with single jump operator of the form Eq.~(\ref{or-jump}). Here, $\mathcal{L}_i$ acts on the three qubits defined by clause $C_i$, and the $i$th ancilla qubit.

The full evolution is in fact easy to calculate exactly, since the individual Lindbladians, $\mathcal{L}_i$ all commute with each other. Moreover, as can be seen from Eq.~(\ref{eq:or-evo}), the action of a single $\mathcal{L}_i$ only changes the state of the $i$th ancilla qubit.

These two facts and Prop.~\ref{prop1} imply
\begin{widetext}
\begin{equation}
\begin{split}
& \mathcal{E}_t\left( |x\rangle \langle y| \otimes |1\rangle \langle 1|^{\otimes N}\right) = \prod_{i=1}^N \mathcal{E}_t^{(i)} \left( |x\rangle \langle y| \otimes |1\rangle \langle 1|^{\otimes N}\right) \\
& = |x\rangle \langle y| \bigotimes_{i=1}^N
\left\{
\begin{array}{ll}
 e^{-\gamma t}|1\rangle \langle 1| + (1-e^{-\gamma t})|c_i\rangle \langle c_i| &\,\,\, \text{if } C_i(x) = c_i=C_i(y) \\
 e^{-\gamma t/2}|1\rangle \langle 1| &\,\,\, \text{if } C_i(x) \neq C_i(y)
\end{array}\right.
\end{split}
\label{eq:proof2}
\end{equation}
\end{widetext}

where $\mathcal{E}_t := e^{t \sum_{i=1}^N \mathcal{L}_i}$, and $\mathcal{E}_t^{(i)} : = e^{t \mathcal{L}_i}$.

Taking $t\rightarrow \infty$ gives  precisely Eq.~(\ref{full-evo}).

We comment on errors associated with finite-time evolution, $t < \infty$. 
In particular, for input of the form $|x\rangle \langle x|$ ($x\in \mathcal{X}$),  the probability of correctly obtaining via a projective measurement outcome $C_i(x)$ for the $i$-th clause is $1-e^{-\gamma t}$. This is precisely the same calculation as in point 1) following Prop.~\ref{prop1}.
Since these $N$ clauses are all independent, in order to correctly evaluate $C(x) = \land_{i=1}^N C_i(x)$ with probability $1-\epsilon$, one requires
\begin{equation}
(1-e^{-\gamma t})^N > 1-\epsilon,
\label{eq:error}
\end{equation}
and hence $t > O(\frac{1}{\gamma} \log \frac{N}{\epsilon})$ \footnote{{We wish to find $t$ such that $(1-e^{-\gamma t})^N > 1-\epsilon$. Noting that $e^{-\epsilon} > 1-\epsilon$, we can instead bound $(1-e^{-\gamma t})^N > e^{-\epsilon}$. Using the identity $-\log(1-x) < \frac{x}{1-x}$ with $x=e^{-\gamma t}$ gives $\frac{Ne^{-\gamma t}}{1-e^{-\gamma t}} < \epsilon$ which can be rearranged to give the result in the main text, $t > O(\frac{1}{\gamma}\log \frac{N}{\epsilon})$.}}.

\end{proof}

\begin{figure}
\includegraphics[scale=0.35]{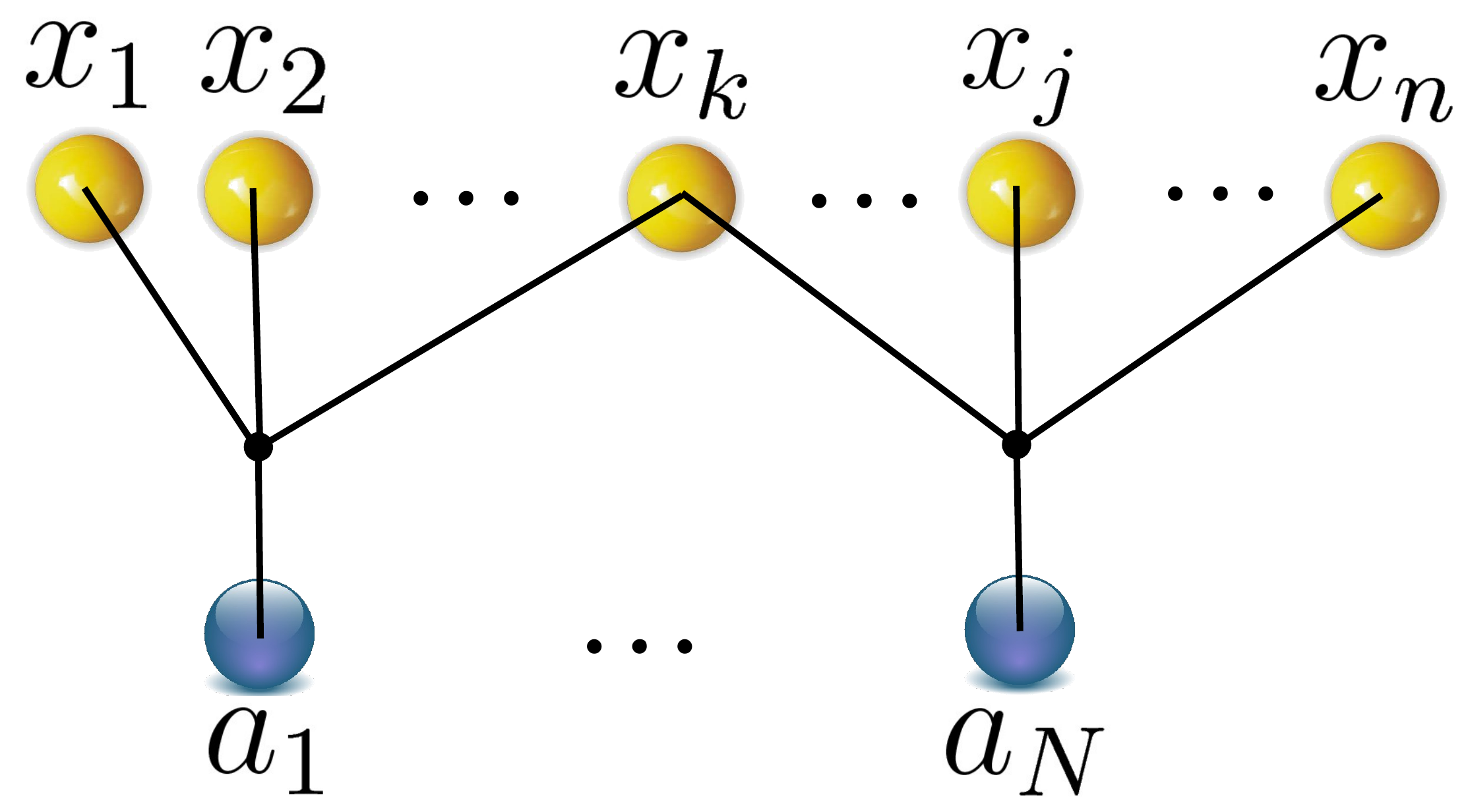}
\caption{Illustration of a dissipative computational network, capable of computing a 3-CNF with $N$ clauses, on $n$ variables. The $x_i$ represent input qubits, and the $a_i$ are ancilla qubits (where the $i$-th clause is evaluated). In this particular example, we show explicitly the clauses $C_1 = l_1 \lor l_2 \lor l_k$ and $C_N = l_k \lor l_j \lor l_n$, where $l_i$ is either $x_i$ or $\neg x_i$.}
\label{3cnf-fig}
\end{figure}

One can see from Eq.~(\ref{eq:proof2}) that this construction completely generalizes the results of the previous subsection (i.e. where $N=1$), and therefore provides a richer structure than previously described.
In particular, there are now up to $2^N$ decoherence-free subspaces, defined according to binary representation 
\begin{equation}
\bar{C}(x):=(C_1(x),\dots , C_N(x)),
\label{eq:cx}
\end{equation}
which uniquely determines the DFS to which $|x\rangle$ ($x\in\mathcal{X}$) belongs. In particular, the generalization of (\ref{eq:DFS1}) is
\begin{equation}
\mathrm{DFS}_{\bar{C}} := \mathrm{span}\left\{  |x\rangle\,:\, \bar{C}(x)=\bar{C} \right\}.
\end{equation}

If indeed $\bar{C}(x)=\bar{C}(y)$, then coherences of the form $|x\rangle \langle y|$ will be preserved under the time evolution of the dissipative quantum network, and if $|\psi\rangle \in \mathrm{DFS}_{\bar{C}}$, then
\begin{equation}
\mathrm{Tr}_{\bar{a}} \left[\mathcal{E}_t \left(|\psi \rangle \langle \psi | \otimes |1\rangle \langle 1|^{\otimes N}\right)\right] = |\psi \rangle \langle \psi |
\end{equation}
where $\mathrm{Tr}_{\bar{a}}$ traces out the $N$ ancilla qubits.

Similarly,  for an arbitrary input $\rho \in L(\mathcal{H}_{\mathcal{X}})$, the infinite time evolved state is of the form
\begin{equation}
\begin{split}
& \rho_\infty = \sum_{\bar{C}\in\{0,1\}^N} \Pi_{\bar{C}} \rho \Pi_{\bar{C}} \otimes |\bar{C}\rangle \langle \bar{C} |, \\
& \Pi_{\bar{C}}:= \sum_{x \in \mathcal{X} \, : \, \bar{C}(x) = \bar{C}} |x\rangle \langle x|,
\end{split}
\label{eq:inf-2}
\end{equation}
where $\Pi_{\bar{C}}^2 = \Pi_{\bar{C}}$  are projectors ($\sum_{\bar{C}\in\{0,1\}^N} \Pi_{\bar{C}}=\mathbb{I}$) over $\mathcal{H}_\mathcal{X}$.

In order for the final state $\rho_t:=\mathcal{E}_t (\rho \otimes |1\rangle \langle 1|^{\otimes N})$ to be $\epsilon$-close \footnote{{Here $\|\cdot \|_1$ is the trace norm: $\|X\|_1 = \mathrm{Tr}|X|$ with $|X|=\sqrt{X^\dagger X}$.}} to the steady state $\rho_{\infty}$,
\begin{equation}
    \| \rho_t - \rho_\infty \|_1 < \epsilon,
\end{equation}
requires (as shown in Appendix \ref{app:bound})
\begin{equation}
    t >  O\left(\frac{n}{\gamma} + \frac{1}{\gamma}\log \frac{N}{\epsilon}\right).
    \label{eq:inf-time}
\end{equation}

From Eq.~(\ref{eq:inf-2}) it is apparent that upon evolving an initial state $|\psi\rangle \in \mathrm{DFS}_{\bar{C}}$ (i.e. where $\Pi_{\bar{C}}|\psi\rangle = |\psi \rangle$), that one can determine $\bar{C}$
 without directly measuring or disturbing sub-system $\cl{H}_\cl{X}$ itself. That is, one can learn $\bar{C}$ passively, whilst still retaining the state $|\psi\rangle$, since here $\rho_\infty = |\psi \rangle \langle \psi | \otimes |\bar{C}\rangle \langle \bar{C}|$.
 
In this case it is apparent that one can `classify' a quantum state  $|\psi\rangle \in \mathrm{DFS}_{\bar{C}}$ according to $\bar{C}(|\psi\rangle) = \bar{C}$.
The definition is `consistent' in the sense that $\bar{C}(|x\rangle) = \bar{C}(x)$, for $x\in \mathcal{X}$, where $\bar{C}(x)$ is defined by Eq.~(\ref{eq:cx}). As such we will also write $\bar{C}(|\psi\rangle) = \bar{C}(\psi)$.

One can generalize this notion as follows:

\begin{definition} Let $|\psi\rangle \in \mathcal{H}_\mathcal{X}$. Then the `DFS-classification' of state $|\psi\rangle$ is given by a function $\tilde{C}:\mathcal{H}_\mathcal{X} \rightarrow [0,1]^N$, defined by
\begin{equation}
\tilde{C}(|\psi \rangle) \equiv \tilde{C}(\psi) : =  \sum_{\bar{C}\in\{0,1\}^N} \| \Pi_{\bar{C}} |\psi \rangle \|^2 \bar{C},
\label{def:C}
\end{equation}
where $\Pi_{\bar{C}} $ is given  in Eq.~(\ref{eq:inf-2}).
\label{def2}
\end{definition}

The interpretation of Def.~\ref{def2} should be clear: Each DFS in the Hilbert space corresponds to a unique vertex of the hypercube $[0,1]^N$. Non-vertex points correspond to states $|\psi\rangle$ which are superpositions between various DFSs, which, according to Eq.~(\ref{eq:inf-2}), are the states that become mixed  during the time evolution of the associated dissipative quantum circuit.

This motivates the straightforward observation:
\begin{lemma} Let $|\psi \rangle \in \cl{H}_{\cl{X}}$. Then,
$\tilde{C}(\psi) = \bar{C} \in \{0,1\}^N$ iff  $|\psi \rangle \in \mathrm{DFS}_{\bar{C}}$.
\label{lemma1}
\end{lemma}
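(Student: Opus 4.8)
The plan is to recognize the quantity $\tilde{C}(\psi)$ defined in Eq.~(\ref{def:C}) as a convex combination of the hypercube vertices $\{0,1\}^N$, and then exploit the fact that each such vertex is an extreme point of $[0,1]^N$. First I would set $p_{\bar{C}'} := \| \Pi_{\bar{C}'} |\psi\rangle \|^2 = \langle \psi | \Pi_{\bar{C}'} | \psi \rangle$, using that $\Pi_{\bar{C}'}$ is an orthogonal projector. Because the projectors are mutually orthogonal and complete ($\sum_{\bar{C}'} \Pi_{\bar{C}'} = \mathbb{I}$, as noted below Eq.~(\ref{eq:inf-2})), the numbers $p_{\bar{C}'}$ are non-negative and satisfy $\sum_{\bar{C}'} p_{\bar{C}'} = \langle \psi | \psi \rangle = 1$. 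Hence $\tilde{C}(\psi) = \sum_{\bar{C}'} p_{\bar{C}'}\, \bar{C}'$ is a genuine convex combination of the vertices of the hypercube, and in particular a point of $[0,1]^N$.

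The reverse implication is immediate. If $|\psi\rangle \in \mathrm{DFS}_{\bar{C}}$, then $|\psi\rangle$ lies in the range of $\Pi_{\bar{C}}$, so $\Pi_{\bar{C}}|\psi\rangle = |\psi\rangle$ and, by orthogonality, $\Pi_{\bar{C}'}|\psi\rangle = 0$ for every $\bar{C}' \neq \bar{C}$. The weight distribution is therefore concentrated at $\bar{C}$, i.e. $p_{\bar{C}} = 1$ and all other weights vanish, which yields $\tilde{C}(\psi) = \bar{C}$ directly.

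For the forward implication I would argue coordinate-by-coordinate that concentration on a single vertex is forced. Suppose $\tilde{C}(\psi) = \bar{C}$ with $\bar{C} \in \{0,1\}^N$. Fixing a coordinate $j \in [N]$, I would split into the two cases $\bar{C}_j = 0$ and $\bar{C}_j = 1$. When $\bar{C}_j = 0$ the equation $\sum_{\bar{C}'} p_{\bar{C}'}\, \bar{C}'_j = 0$ together with non-negativity of every term forces $p_{\bar{C}'} = 0$ whenever $\bar{C}'_j = 1$; when $\bar{C}_j = 1$ the complementary identity $\sum_{\bar{C}'} p_{\bar{C}'}(1 - \bar{C}'_j) = 0$ forces $p_{\bar{C}'} = 0$ whenever $\bar{C}'_j = 0$. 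Intersecting these conditions over all $j$ leaves weight only on the single vertex $\bar{C}' = \bar{C}$, so $p_{\bar{C}} = 1$ and $\Pi_{\bar{C}}|\psi\rangle = |\psi\rangle$, i.e. $|\psi\rangle \in \mathrm{DFS}_{\bar{C}}$.

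The only real content is this last step, which is exactly the statement that a vertex of the hypercube admits no nontrivial representation as a convex combination of vertices; the coordinate-wise non-negativity argument above is the cleanest way to pin it down, and I expect no genuine obstacle beyond being careful that the $p_{\bar{C}'}$ really do form a probability vector. An alternative worth mentioning is to invoke directly that the vertices of $[0,1]^N$ are its extreme points, but the elementary argument has the advantage of being self-contained.
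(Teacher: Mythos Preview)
Your argument is correct. The paper itself does not supply a proof of this lemma; it simply introduces it as ``the straightforward observation'' and moves on, so there is nothing to compare against beyond noting that your write-up makes explicit the convexity/extreme-point reasoning the authors evidently had in mind but left implicit.
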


We now see that Def.~\ref{def2} completely generalizes Def.~\ref{def1}; $x,y \in \cl{X}$ belong to the same partition of $\cl{X}$ according to Def.~\ref{def1}, iff $\tilde{C}(x) = \tilde{C}(y)$ in Def.~\ref{def2}.

The $i$-th clause $C_i$ evaluated on $|\psi \rangle$, is given by the $i$-th entry of $\tilde{C}(\psi)$ from Def.~\ref{def2} -- which we denote $\tilde{C}_i(\psi) \in[0,1]$ -- and shows that in the quantum case, according to this definition, Boolean logic is replaced by fuzzy logic \cite{fuzzy-logic}, in accordance with the DFS partitioning of the space. One would of course require multiple samples to estimate this quantity for a given state.

One can now also lift the definition of the Boolean function $C$ itself over classical bit strings to a function over the Hilbert space:

\begin{definition} Let $|\psi \rangle \in \cl{H}_{\cl{X}}$. The function $\hat{C}_1 : \cl{H}_{\cl{X}} \rightarrow [0,1]$ is defined by
\begin{equation}
\hat{C}_1(|\psi \rangle)  \equiv \hat{C}_1(\psi) := \| \Pi_{\bar{e}} |\psi \rangle \|^2
\end{equation}
where $\bar{e} : = (1,\dots , 1)$.
\label{def3}
\end{definition}

Def.~\ref{def3} is also fully consistent with the definition of the underlying classifier $C$, and therefore generalizes this Boolean function. In particular, for $x\in \cl{X}$, $\hat{C}_1(|x\rangle) = C(x)$. Moreover, if $|\psi \rangle$ is of the general form 
\begin{equation}
    |\psi\rangle = \sum_{x : C(x)=c} a_x |x\rangle,
\end{equation}
the network will output $\hat{C}_1(\psi) = c \in \{0,1\}$, with probability $1-\epsilon$.

Inspired by the classical evaluation of a CNF formula, $C(x) = \prod_{i=1}^N C_i(x)$ for $x\in \mathcal{X}$, one may alternatively be interested in perhaps the more natural (fuzzy) generalization:
\begin{definition} Let $|\psi \rangle \in \cl{H}_{\cl{X}}$. The function $\hat{C}_2 : \cl{H}_{\cl{X}} \rightarrow [0,1]$ is defined by
\begin{equation}
\hat{C}_2(|\psi \rangle)  \equiv \hat{C}_2(\psi) := \prod_{i=1}^N \tilde{C}_i (\psi)
\end{equation}
where $\tilde{C}_i(\psi)$ is the $i$-th entry of vector $\tilde{C}(\psi)$ in Def.~\ref{def2}.
\label{def4}
\end{definition}

This is also fully consistent with the underlying classical definition, i.e. $\hat{C}_2(|x\rangle) = C(x)$, for $x\in\mathcal{X}$. Moreover, under Def.~\ref{def4}, for $|\psi\rangle \in \mathrm{DFS}_{\bar{C}}$, $\hat{C}_2(\psi) = \prod_{i=1}^N \bar{C}_i = \hat{C}_1(\psi)$ as expected.

In this section, we have shown how one can lift classical notions pertaining to data classification in a natural manner to the quantum case. We provided a practical construction for implementing this physically, defining a local dissipative computational network, which naturally evaluates such classifiers, as in Prop.~\ref{prop2}. Not only can the network evaluate the clauses $C_i(x)$ for $x\in \cl{X}$ of a CNF formula, but also give a fuzzy logic meaning to $C_i(\psi)$ for quantum states $|\psi\rangle$, and therefore, similarly to the CNF formula $C(\psi)$ itself.
As discussed in Sect.~\ref{background-sect}, this framework applies also in the more general case where $C(x) \in \{0,1\}^m$.

We now show how to use these ideas for  practical applications relating to quantum learning theory and state preparation.

\section{Applications and Examples}

\subsection{A dissipative quantum data classifier \label{sect:dissi-classifier}}

A common classical machine learning task is, given labeled samples $(x,f(x))$ where $x\in \mathcal{X}$, and $f:\mathcal{X}\rightarrow \{0,1\}^m$, to build a classifier $C : \mathcal{X}\rightarrow \{0,1\}^m$, such that given a new sample, $y$, that $C(y) = f(y)$ with high probability.

There are many approaches which attempt to solve this problem classically, including linear regression models, Bayesian networks and artificial neural networks \cite{aima-book}.

One such quantum generalization of this type of task is  to learn $f(x)$ given samples of $n+m$ qubits of the form
\begin{equation}
|\psi_i\rangle = \sum_x a_x^{(i)} |x,f(x)\rangle ,
\label{eq:q-learn}
\end{equation}
where $i$ labels the sample.
Many well-known results in quantum learning theory are phrased in this manner, and quantum speed-ups have been demonstrated \cite{qpac-bshouty-jackson,ML-Q_world,unsupervised-speedup,Bernstein-Vazirani-learningTheory,simon:94,servedio}.
Using $|\psi_i\rangle$ as an input into our quantum network, acting only on the $n$ qubits encoding $x$ and the $N$ ancilla qubits defined by the network (i.e. leaving alone the $m$ qubits encoding $f(x)$), following from Eq.~(\ref{eq:inf-2}) the infinite time evolved state is
\begin{equation}
    \sum_{\bar{C}\in\{0,1\}^N} (\Pi_{\bar{C}} \otimes \mathbb{I}) |\psi_i\rangle \langle \psi_i | (\Pi_{\bar{C}} \otimes \mathbb{I}) \otimes |\bar{C} \rangle \langle \bar{C}|.
    \label{eq:inf3}
\end{equation}
Upon measurement of $\bar{C}$ in the $N$ ancilla qubits, the state of the $n+m$ qubits is (up to normalization)
\begin{equation}
    \sum_{x: \bar{C}(x) = \bar{C}} a_x^{(i)} |x,f(x)\rangle.
\end{equation}
That is, the sample $|\psi_i\rangle$ is projected into subspace $\mathrm{DFS}_{\bar{C}}$.

We will show by example how using the general construction outlined in the previous section, one can build a trainable quantum network,  which can be used to classify quantum data, in accordance with Defs.~\ref{def2}, \ref{def3}, \ref{def4}.

At a high level, the prescription of using our scheme to learn quantum data consists of four basic steps:
\begin{itemize}
\item[i)] Initialize network configuration: Define a system of $4$-local Lindbladians which each act on up to three input qubits, and one unique ancilla qubit, as described by Eq.~(\ref{or-jump}). 
\item[ii)] Evolve the network:
Input is a quantum state of the general form Eq.~(\ref{eq:q-learn}). Evolve network for time $O(\frac{n}{\gamma}+\frac{1}{\gamma}\log \frac{N}{\epsilon})$, where $N$ is the number of ancilla qubits, and $\epsilon$ the  error associated with finite time evolution. The resulting state will be $\epsilon$-close to Eq.~(\ref{eq:inf3}).
\item[iii)] Measurement:  Perform post-processing of the quantum data, e.g. by performing a POVM.
\item[iv)] Update: Based on the measurement result in step iii), update the network configuration and return to step ii). If no update is required and the network has converged, subsequent evaluations of the network will correctly classify new quantum data (to error $\epsilon$).
\end{itemize}

We demonstrate these general ideas by a simple and well known classical example (see e.g. Ref.~\cite{kearns-vazirani}), where a priori the space of possible functions from which one is learning is of exponential size $2^{2n}$, but where the problem class can be learned efficiently, and moreover will require  at most $2n$ ancilla qubits in our construction.

Note that the goal of the proceeding example is \textit{not} to demonstrate a quantum advantage. Rather, it is to show that using the tools outlined in this work, certain classical tasks can be generalized to a quantum setting.
Indeed, in the example below, the update procedure we use in step iv) is entirely classical in nature. 
In principle one could achieve the same outcome (learning the target function $f$) by first measuring each of the qubits in the computational basis -- destroying any quantum coherence -- and running the classical algorithm. 
By using our network however, as the algorithm progresses, the amount of quantum coherence preserved after step iii) of the procedure just mentioned, is increased.
An interesting question is whether one can gain performance advantages using an algorithm which is inherently quantum.

\subsubsection{Example: Learning the class of conjunctions \label{sect:conjunctions}}
Consider the task of learning the function $f:\mathcal{X}\rightarrow \{0,1\}$ where $f$ is guaranteed to be a pure conjunction. 
That is, $f$ is of the form $f = f_1 \land  \dots \land f_k$, where the clauses $f_i$ contain just a single literal $l_i$ that are of the form $x_j$ or $\neg x_j$ for $j \in [n]$.
Let us also write for $f$ the corresponding vector of clauses $\bar{f} = (f_1,\dots, f_k)$.
Then $\bar{f}(x) = (f_1(x),\dots,f_k(x))$ describes the evaluation of the $k$ clauses on input $x\in\cl{X}$.

This is in fact an easy problem, and the classical algorithm to solve it is to start with the hypothesis $C = x_1 \land \neg x_1 \land \dots \land x_n \land \neg x_n$ (i.e. a conjunction of all possible literals), and whenever a positive labeled sample $(y,1)$ is observed, for all $y_i=1$, remove the literal $\neg x_i$ from $C$,  and similarly for all $y_i=0$, remove the literal $x_i$ from $C$.

We will demonstrate the above in a dissipative quantum network, by utilizing $2n$ ancilla qubits, and therefore $3n$ qubits in total. The initial CNF formula defining the network is as in the previous paragraph, from which one constructs the 2-local Lindbladian $\cl{L} = \sum_{i=1}^{2n}\mathcal{L}_i$, where $\cl{L}_i$ acts on input qubit $\left \lfloor{i/2}\right \rfloor$, and the $i$-th ancilla qubit. These $\cl{L}_i$ are defined in a similar manner as in Eq.~(\ref{or-jump}), instead now with just 2 qubits.

Let us first consider the case where labeled states are promised to be of the form $(|x\rangle, f(x))$, where $x\in \mathcal{X}$.
One evolves the network under $\cl{L}$, and measures the $2n$ ancilla qubits, obtaining $C_i(x)$ where $i$ labels each ancilla qubit. We will assume that the network is evolved for a sufficiently long time at each step so that the probability of incorrectly computing $C_i(x)$ is negligible.

Given a positive sample, $f(y)=1$, one can delete any $\cl{L}_i$ from the network which results in the $i$-th ancilla qubit measuring 0, since this implies $C(y)=0$. The network is then re-set and run again, possibly now with fewer ancilla qubits. Repeating this process guarantees the convergence of $C$ to $f$.
The total number of times one must update the network in this manner is upper bound by $2n$. We show an example of the network training process in Fig.~\ref{fig:conjunction}.

It is at this point interesting to note that one can perform the exact same algorithm if one receives labeled quantum states $(|\psi\rangle,f(\psi))$ where $|\psi\rangle$ is guaranteed to be a superposition of classical states all with the same evaluation on the clauses of $f$. In this case, following Defs.~\ref{def3}, \ref{def4}, $f(\psi)\in \{0,1\}$, and the input state is   of the general form
\be
 |\psi\rangle = \sum_{x : \bar{f}(x) = \bar{{f}}'}a_x |x\rangle,
 \label{eq:psi-promise}
 \ee 
 for some $\bar{{f}}'\in \{0,1\}^k$ such that $f(\psi) = {f'}_1 \land \dots \land f'_k$, and $a_x$ arbitrary normalized complex amplitudes.
 Note, these samples can also be described equivalently as in Eq.~(\ref{eq:q-learn}), where one first measures the qubit encoding $f(x)$, since in this case $f(x)\equiv f(\psi)$, for all $ a_x$ in Eq.~(\ref{eq:psi-promise}).

Evolving the network with input $|\psi\rangle$, and
measuring the ancilla qubits -- obtaining outcome $\bar{C} = (C_1,\dots ,C_N ) \in \{0,1\}^N$  -- results in a projection $|\psi_{\bar{C}}\rangle := \Pi_{\bar{C}}|\psi\rangle / \|\Pi_{\bar{C}}|\psi\rangle\| $ (as in Eq.~(\ref{eq:inf-2})), where $N$ is the number of ancilla qubits in the current network.
Since by assumption $f(\psi) = f(\psi_{\bar{C}})$,
if $C(\psi_{\bar{C}})=C_1 \land \dots \land C_N$ disagrees with $f(\psi)$, one updates the network as above, by  deleting the conflicting $\cl{L}_i$ which define $C$.

Once the network is fully trained, given an unlabeled state $|\psi\rangle$ as above, one can evaluate $f(\psi)$ without directly measuring $|\psi\rangle$ itself, i.e., one only needs to observe the ancilla qubits. 
Moreover, since by assumption an input state $|\psi\rangle \in \mathrm{DFS}_{\bar{f}'}$ for some $\bar{f}'$, and therefore $\Pi_{\bar{f}'}|\psi\rangle = |\psi\rangle$, the initial state of the input qubits is the same as the final state; that is, one can passively learn $f(\psi)$ (i.e., classify the state $|\psi\rangle$), whilst still retaining the quantum state $|\psi\rangle$.
One can therefore classify $|\psi\rangle$,  and still use $|\psi\rangle$ for subsequent computations.

This is quite a unique occurrence, since typically to obtain any information about a quantum state requires at least some destructive measurement to take place.

\begin{figure}
\includegraphics[scale=0.4]{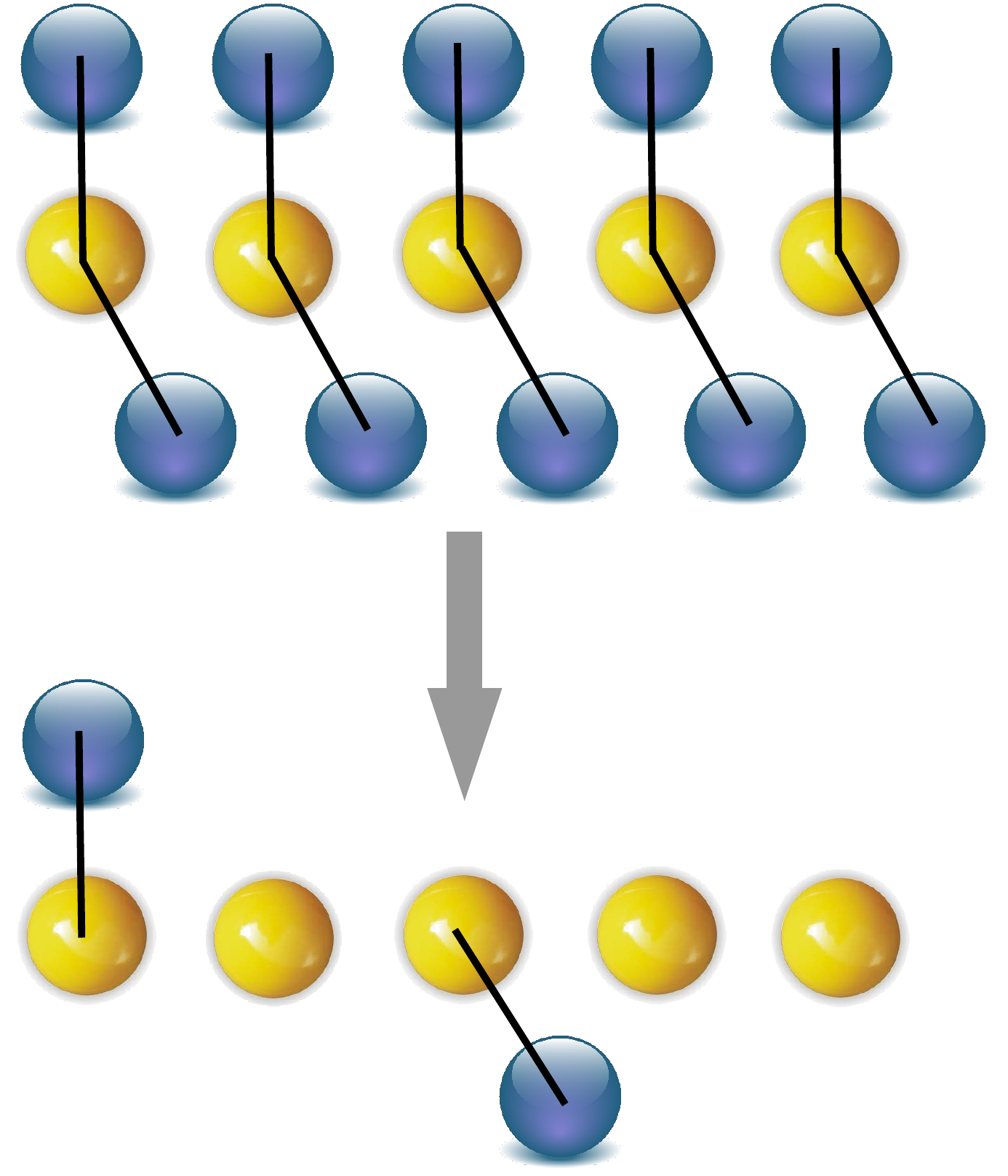}
\caption{Example of a network learning the conjunction $f = x_1 \land \neg x_3$, for $n=5$. The initial network (top) is configured to compute $x_1 \land \neg x_1 \land \dots \land x_5 \land \neg x_5$ as described by the learning algorithm of the main text. The top row of ancilla qubits (blue) compute the literals $x_i$, and the bottom row the negations $\neg x_i$. After training for a sufficiently long time so that the network has converged, the final network is given by the bottom diagram, which now has just two ancilla qubits (blue), corresponding to the computation of $x_1 \land \neg x_3$. The Hilbert space therefore has 4 associated DFSs. Input (data) qubits are shown in yellow.}
\label{fig:conjunction}
\end{figure}

\subsection{Dissipative preparation of PAC states \label{sect:PAC}}

In the quantum version \cite{qpac-bshouty-jackson} of the Probably Approximately Correct (PAC) model \cite{Valiant:PAC}, copies of the state $|\psi \rangle = \sum_{x \in \mathcal{X} } \sqrt{p(x)}|x,f(x)\rangle$ are provided, where $p(x)$ are probabilities, and $f(x)\in \{0,1\}$ a Boolean function which is to be learned.
Quantum states of this type, arising from a `quantum example oracle', are typically represented by some oracular unitary $U_f$ such that $U_f|x,b\rangle = |x,b\oplus f(x)\rangle$.
We show here that one can produce these states  in a fundamentally different manner using a dissipative (non-unitary) network of the type outlined in this work.

The goal is to prepare quantum PAC states given $\sum_{x \in \mathcal{X} } \sqrt{p(x)}|x\rangle$. The procedure which follows also applies to the more general setting where the amplitudes are complex.

We make a small modification to the Lindblad operators $L$ in Eq.~(\ref{or-jump}) as follows:
\begin{equation}
    L = \Pi^\neg \otimes \sigma^- + (\mathbb{I} - \Pi^\neg)\otimes \sigma^+.
    \label{eq:robust-lindblad}
\end{equation}

Then we have:
\begin{proposition} Let $L$ be as in Eq.~(\ref{eq:robust-lindblad}) defined by clause $C = l_1 \lor l_2 \lor l_3$. Then the infinite time evolution of matrix element $|x\rangle \langle y|$ under the network, given the ancilla state initialized as $|+\rangle = \frac{1}{\sqrt{2}}(|0\rangle + |1\rangle)$ is
\begin{equation}
\mathcal{E}_\infty (|x\rangle \langle y| \otimes |+\rangle \langle +|) = |x\rangle \langle y| \otimes |C(x)\rangle \langle C(y)|.
\end{equation}
\label{prop3}
\end{proposition}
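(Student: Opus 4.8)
The plan is to exploit, exactly as in the proof of Prop.~\ref{prop1}, the fact that the system operator $|x\rangle\langle y|$ decouples from the dynamics so that only the ancilla qubit evolves; the reduced ancilla dynamics is then a single-qubit Lindblad equation whose form is dictated entirely by the pair $(C(x),C(y))$. First I would record the action of the modified jump operator in Eq.~(\ref{eq:robust-lindblad}) on the computational basis. Since $\Pi^\neg$ acts diagonally on $\cl{H}_\cl{X}$, one finds $L(|x\rangle\otimes|b\rangle) = |x\rangle\otimes\sigma^-|b\rangle$ when $C(x)=0$ and $L(|x\rangle\otimes|b\rangle)=|x\rangle\otimes\sigma^+|b\rangle$ when $C(x)=1$, and correspondingly $L^\dagger L = \Pi^\neg\otimes|1\rangle\langle 1| + (\mathbb{I}-\Pi^\neg)\otimes|0\rangle\langle 0|$, i.e. the projector onto the ``wrong'' ancilla state relative to $C(x)$. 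Substituting $\rho = |x\rangle\langle y|\otimes M$ into Eq.~(\ref{eq:lindblad}) and applying these identities to both the left and right factors shows that $|x\rangle\langle y|$ is left untouched, leaving $\mathcal{L}(|x\rangle\langle y|\otimes M) = |x\rangle\langle y|\otimes \mathcal{L}_{xy}(M)$ for an effective single-qubit generator $\mathcal{L}_{xy}$ fixed by $(C(x),C(y))$.

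I would then split into four cases. For the two diagonal cases, $C(x)=C(y)=0$ gives the standard amplitude-damping generator with jump $\sigma^-$, whose unique stationary state is $|0\rangle\langle 0|$, while $C(x)=C(y)=1$ gives the ``pumping'' generator with jump $\sigma^+$ and stationary state $|1\rangle\langle 1|$; starting from $M(0)=|+\rangle\langle +|$, convergence to these fixed points yields exactly $|C(x)\rangle\langle C(y)|$. The off-diagonal cases are the heart of the argument and are where the modification in Eq.~(\ref{eq:robust-lindblad}) genuinely matters. For $C(x)=0$, $C(y)=1$ one computes directly that $L(|x\rangle\langle y|\otimes M)L^\dagger = |x\rangle\langle y|\otimes \sigma^- M\sigma^-$ while $\{L^\dagger L,\boldsymbol{\cdot}\}$ contributes $|1\rangle\langle 1|M + M|0\rangle\langle 0|$, so writing $M=\sum_{a,b}m_{ab}|a\rangle\langle b|$ the reduced equations decouple into $\dot m_{00}=-\tfrac{\gamma}{2}m_{00}$, $\dot m_{11}=-\tfrac{\gamma}{2}m_{11}$, $\dot m_{10}=-\gamma m_{10}$ and $\dot m_{01}=\gamma m_{10}$.

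Solving this small linear system, the diagonal entries and $m_{10}$ decay to zero while $m_{01}(t)\to m_{01}(0)+m_{10}(0)$; for $M(0)=|+\rangle\langle +|$ (all entries equal to $1/2$) this limit equals $1$, so $M(\infty)=|0\rangle\langle 1| = |C(x)\rangle\langle C(y)|$. The case $C(x)=1$, $C(y)=0$ then follows immediately by Hermitian conjugation, using $\mathcal{L}(\rho^\dagger)=\mathcal{L}(\rho)^\dagger$, and collecting the four cases gives the claimed formula. I expect the main obstacle to be precisely this off-diagonal analysis: unlike the original construction of Prop.~\ref{prop1}, where inter-sector coherences simply decay as $e^{-\gamma t/2}$, here the added $\sigma^+$ term feeds the decaying component $m_{10}$ into $m_{01}$, so that the coherence is \emph{preserved} (indeed amplified from $1/2$ to $1$) and converges to the correct off-diagonal element. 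Verifying this non-decay carefully, rather than assuming the coherences are lost, is the crux of the proof.
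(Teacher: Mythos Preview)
Your proposal is correct and follows essentially the same route as the paper: reduce to an effective single-qubit ancilla generator depending only on $(C(x),C(y))$, treat the four cases, and verify that the off-diagonal cases preserve (rather than kill) the relevant coherence. The only cosmetic difference is bookkeeping: the paper first decomposes $|+\rangle\langle +|$ into matrix units $|i\rangle\langle j|$ and shows $\mathcal{E}_\infty(|x\rangle\langle y|\otimes|i\rangle\langle j|)=\delta_{i,j}|x\rangle\langle y|\otimes|C(x)\rangle\langle C(y)|$ when $C(x)=C(y)$ and $\delta_{i,\neg j}|x\rangle\langle y|\otimes|C(x)\rangle\langle C(y)|$ when $C(x)\neq C(y)$, then sums, whereas you write and integrate the $2\times 2$ ODE system for $M$ directly; the content is the same.
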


\begin{proof}
To see this use that $\mathcal{E}_t$ is linear, and consider $\mathcal{E}_\infty( |x\rangle \langle y| \otimes |i\rangle \langle j|)$, for $i,j \in \{0,1\}$ separately. 

Consider the four cases:
\begin{equation}
    \begin{split}
        & L|x,0\rangle = 
        \left\{
\begin{array}{ll}
0  & \text{if } C(x) = 0\\
 |x,1\rangle & \text{if }  C(x)=1
\end{array}\right.
\\
        & L|x,1\rangle = 
        \left\{
\begin{array}{ll}
|x,0\rangle & \text{if } C(x) = 0\\
 0 & \text{if }  C(x)=1,
\end{array}\right.
    \end{split}
\end{equation}
and similarly $L^\dagger L |x,i\rangle = \delta_{i, \neg C(x)}|x,i\rangle$ where $i\in \{0,1\}$.

One can then write
\begin{equation}
\begin{split}
     & \cl{L}\left(|x\rangle \langle y| \otimes |i\rangle \langle j| \right) = \\ &
    \gamma |x\rangle \langle y| \otimes 
    [ \delta_{i,\neg C(x)} \delta_{j,\neg C(y)} |C(x)\rangle \langle C(y)| - 
    \\ &
      \frac{1}{2}|i\rangle \langle j| \left(\delta_{i,\neg C(x)} + \delta_{j,\neg C(y)} \right) ] .
\end{split}
\end{equation}

From this one can check -- in a similar manner as the proof of Prop.~\ref{prop1} -- that if $C(x)=C(y)$, then 
\begin{equation}
    \mathcal{E}_\infty( |x\rangle \langle y| \otimes |i\rangle \langle j|) = \delta_{i,j}|x\rangle \langle y|\otimes |C(x)\rangle \langle C(y)|.
\end{equation}

Similarly, if $C(x)\neq C(y)$, one has
\begin{equation}
    \mathcal{E}_\infty( |x\rangle \langle y| \otimes |i\rangle \langle j|) = \delta_{i,\neg j}|x\rangle \langle y|\otimes |C(x)\rangle \langle C(y)|.
\end{equation}
 That is, in this latter case, coherences $|x\rangle \langle y |$ remain only when $i\neq j$.
 
 The result follows noting that $|+\rangle \langle + | = \sum_{i} \frac{1}{2}|i\rangle \langle i| + \sum_{i\neq j}\frac{1}{2}|i\rangle \langle j |$.
 
\end{proof}

From Prop.~\ref{prop3} it immediately follows that under the network with $N$ clauses one has in the long time limit \footnote{The same arguments as in Appendix \ref{app:bound} apply to Prop.~\ref{prop3} regarding the evolution time as a function of error $\epsilon$, which is therefore still $t\sim O(\frac{n}{\gamma}+\frac{1}{\gamma}\log \frac{N}{\epsilon})$.} ($t\rightarrow  \infty$)
\begin{equation}
    \sum_{x \in \mathcal{X} } \sqrt{p(x)}|x\rangle |+\rangle^{\otimes N} \rightarrow \sum_{x \in \mathcal{X} } \sqrt{p(x)}|x\rangle |\bar{f}(x)\rangle
\end{equation}
where $|\bar{f}(x)\rangle = \otimes_{i=1}^N |f_i(x)\rangle$ with $f_i(x)$ the $i$-th clause of $f(x)$ realized as a 3-CNF.
Since $|\bar{f}(x)\rangle$ contains all the information required to compute $f(x)$, states of this type can equivalently be used in the PAC framework. The only complication is that one now has $N$ additional qubits in the scheme, instead of just 1. 
Nevertheless, if one is interested in machine learning involving CNF or DNF formulas \footnote{As mentioned in point 4 following Prop.~\ref{prop1}, our framework can equivalently be used to evaluate both CNF and DNF formulas.} this framework provides an alternative mechanism for obtaining quantum PAC states.
This is therefore relevant even in the original formulation of the quantum PAC model by  Ref.~\cite{qpac-bshouty-jackson} for which formulas in DNF over the uniform distribution can be efficiently PAC learned on a quantum computer, as compared to the best known quasi-polynomial classical algorithm.

\subsection{Probabilistic preparation of quantum states}

By Eq.~(\ref{eq:inf-2}), upon measuring state $|\bar{C}\rangle$ of the ancilla qubits, the resulting time-evolved input state is, with probability $1-\epsilon$, $\Pi_{\bar{C}} \rho \Pi_{\bar{C}}/\mathrm{Tr}(\Pi_{\bar{C}} \rho)$, where $\rho$ was the initial state.

One can use this to prepare quantum states. In particular, to prepare state $|\psi\rangle$, one requires a suitable, easy to prepare input state, $|\psi_0\rangle$, and to define an appropriate CNF formula so that, for some $\bar{C}\in \{0,1\}^N$, one has $\Pi_{\bar{C}} |\psi_0\rangle \propto |\psi\rangle$.
Then, upon measuring $\bar{C}$ in the ancilla qubits, which occurs with probability $\langle \psi_0 |\Pi_{\bar{C}}|\psi_0\rangle$, one has prepared state $|\psi\rangle$.

One could also prepare an ensemble of states in this manner.

\subsubsection{Preparation of entangled states}

Consider for simplicity, two qubits, which we wish to entangle, that  are initially prepared in the product state $|\psi_0\rangle = |+\rangle\otimes |+\rangle$.
Here, in the $z$-eigenbasis, $|+\rangle := \frac{1}{\sqrt{2}}(|0\rangle + |1\rangle)$.
We wish to prepare the Bell state $|\psi^+\rangle := \frac{1}{\sqrt{2}}(|00\rangle + |11\rangle )$.

This procedure will be effectively the same as performing a projective measurement on $|\psi_0\rangle$, although we will in fact not perform any measurement over $\mathcal{H}_{\mathcal{X}}$. This example is interesting therefore for two reasons: 1) dissipation is used directly to create entanglement, and 2) it provides a novel manner in which to actually perform a measurement of a quantum system.

To achieve this task, we take the 2-CNF, $C = (x_1 \lor \neg x_2) \land (\neg x_1 \lor x_2)=:C_1 \land C_2$, with associated projectors
\begin{equation}
\begin{split}
& \Pi_{(0,0)} = 0,\, \Pi_{(0,1)} = |01\rangle \langle 01|,\,\Pi_{(1,0)} = |10\rangle \langle 10|, \\ 
& \Pi_{(1,1)} = |00\rangle \langle 00 | + |11\rangle \langle 11|.
\end{split}
\end{equation}
From this, utilizing two ancilla qubits that are initialized in the $|1\rangle$ state, one can construct a following 3-local Lindbladian, $\cl{L} = \cl{L}_1 + \cl{L}_2$.
These Lindbladians are defined by Lindblad operators $L_1 = |01\rangle \langle 0 1|\otimes \sigma_1^-$, $L_2 = |10\rangle \langle 10|\otimes \sigma_2^-$, where the second term in the tensor product acts on a single ancilla qubit, labeled by $1,2$ respectively.
See inset of Fig.~\ref{entanglement-fig} for a schematic of the network.

This evolution will partition the space into three DFSs, and in particular, the infinite time evolved state is
\begin{equation}
\rho_\infty = \sum_{c_1,c_2 \in \{0,1\}} P_{c_1,c_2}|\psi_{c_1,c_2}\rangle \langle \psi_{c_1,c_2} | \otimes |c_1,c_2\rangle \langle c_1,c_2|,
\end{equation}
with the probability of being in each sector  after measurement of the two ancilla qubits $P_{\bar{C}} = \langle \psi_0 | \Pi_{\bar{C}} | \psi_0 \rangle$,
\begin{equation}
P_{0,0} = 0,\, P_{0,1} =  \frac{1}{4} = P_{1,0},\,P_{1,1} = \frac{1}{2}.
\end{equation}

The states in each sector are given by
\begin{equation}
\begin{split}
& |\psi_{0,1}\rangle = \frac{\Pi_{(0,1)}|\psi_0\rangle}{\sqrt{P_{0,1}}}= |01\rangle  \\
& |\psi_{1,0}\rangle \frac{\Pi_{(1,0)}|\psi_0\rangle}{\sqrt{P_{1,0}}}  =|10\rangle\\
& |\psi_{1,1}\rangle =\frac{\Pi_{(1,1)}|\psi_0\rangle}{\sqrt{P_{1,1}}}= \frac{1}{\sqrt{2}}(|00\rangle + |11\rangle).
\end{split}
\end{equation}
Note, the $(c_1,c_2)=(0,0)$ sector is empty since at least one of the two clauses in $C$ must be satisfied by construction. That is, in this case, there is no such state $|\psi_{0,0}\rangle$.

After evolving the network for a sufficiently long time, one can post-select on the ancilla qubits with success probability $\frac{1}{2}$ to pick out the Bell state:
\begin{equation}
\rho_{1,1}(t) :=\mathrm{Tr}_{\bar{a}}\left[\tilde{\Pi}_{1,1} \rho(t) \tilde{\Pi}_{1,1} \right] \xrightarrow[t \to \infty]{} |\psi^+ \rangle \langle \psi^+| 
\label{eq:rho_11}
\end{equation}
where 
\begin{equation}
\begin{split}
& \rho(t) := \cl{E}_t \left( |\psi_0\rangle \langle \psi_0| \otimes |1,1\rangle \langle 1,1| \right),\\
& \tilde{\Pi}_{1,1} : = \mathbb{I}_4 \otimes |1,1\rangle \langle 1,1|.
\end{split}
\end{equation}

In Fig.~\ref{entanglement-fig} we provide a numerical verification of this scheme, in which the state of the qubits, upon post-selecting on (1,1) on the two ancilla qubits, converges exponentially quickly to the desired maximally entangled state.

\begin{figure}
\includegraphics[scale=0.4]{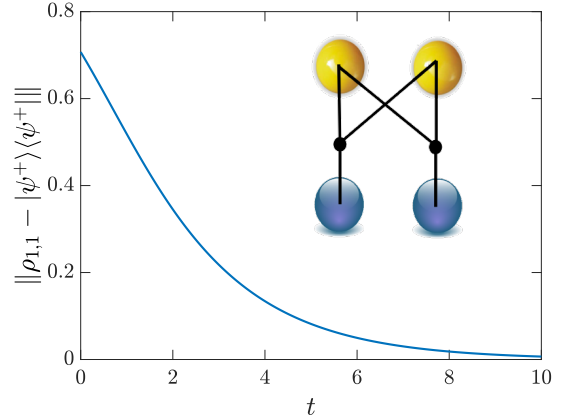}
\caption{Distance to maximally entangled Bell state $|\psi^+\rangle = \frac{1}{\sqrt{2}}(|00\rangle + |11\rangle)$, under dissipative evolution for time $t$ (in units of $1/\gamma$). Here, $\rho_{1,1}$ is the state after post-selecting on measuring 1 in both ancilla qubits Eq.~(\ref{eq:rho_11}), which occurs with probability $1/2$. Initial state is the product state $|+,+\rangle$. The norm is the maximum singular value. Inset: the connectivity of the 2+2 qubit dissipative network.}
\label{entanglement-fig}
\end{figure}

\subsubsection{Generating a superposition of all solutions to 3-SAT problem}

3-SAT problems are NP-hard optimization problems, where one must find the solution(s) $C(x)=1$ to  a 3-CNF formula $C$.

Given a 3-SAT problem $C$ over $n$ literals, one can construct a dissipative network (of 4-local Lindbladians) with $N$ ancilla qubits, where $N$ is the number of clauses in the 3-SAT problem.

The input state to the network is initialized in the maximal superposition state, $|\psi\rangle = \frac{1}{2^{n/2}} \sum_{i=1}^{2^n} |i\rangle$.
After evolving the network for time $t \sim O(\frac{n}{\gamma}+\frac{1}{\gamma}\log \frac{N}{\epsilon})$, upon measurement of $|1\rangle^{\otimes N}$ of the ancilla qubits, the resulting state of the input qubits is guaranteed (with probability $1-\epsilon$) to be
\begin{equation}
|S\rangle = \frac{1}{\sqrt{N_s}} \sum_{s=1}^{N_s} |s\rangle
\label{eq:solutions}
\end{equation}
where each classical bit string $s$ is such that $C(s) = 1$. Moreover, there are no other bit-strings with this property. That is, $|S\rangle$ is the uniform superposition of all solutions to the 3-SAT problem. We strongly stress however that in general it is exponentially unlikely to observe such a state; we are not claiming to have an efficient algorithm to solve SAT problems.

Nevertheless, this general methodology shows potential promise in obtaining a fair sampling for certain problems, which is typically challenging on current quantum optimization devices \cite{fairInUnfair}.

\section{Discussion}

We have shown how to evaluate 3-CNF (and DNF) clauses (and hence arbitrary Boolean functions) using dissipative quantum dynamics.
We did this by providing a way to construct a dissipative network consisting of 4-local Lindbladians, where upon measuring a subset of the qubits in the network, one can evaluate the CNF formula. 
We also showed that errors associated with a finite time evolution can be made arbitrarily small.
The CNF structure naturally partitions the Hilbert space into sectors in which coherence is preserved, i.e., decoherence-free subspaces. 
This provides a route to generalize classical notions of data classification, lifting the definitions to the case where the data is quantum in nature.
In particular, we achieve this task by classifying quantum data according to the partition, or DFS, to which a particular state belongs, which itself is defined by the dissipative network.

This has applications in state preparation, and perhaps more interestingly, machine learning. 
We provided one such example, but have hopefully have demonstrated the general applicability of these techniques.
Indeed, one can make a connection between the work presented here, and a Hopfield recurrent neural network, which we now comment on.

A Hopfield network is a classical pattern recognition and classification machine learning model, where the `patterns' or `memories' one wishes to associate new data, are local minima of a pre-defined energy function. 
In particular, these patterns act like fixed points of the total space $\cl{X}$ of a dynamical system (where the dynamics is described entirely by the energy function).
We  make an analogy between this classical dynamical system, and the quantum dynamical systems described in this work defined entirely by dissipative dynamics with attractive fixed points.

In our model, which is morally similar, data is classified according to the DFSs of the computational network, which are analogous to the patterns of the corresponding Hopfield network. 
However, not only can the dissipative networks presented in this work classify classical data, e.g. compute a function $C(x)\in\{0,1\}$ for  $x\in\cl{X}$, they can also classify quantum data in a consistent manner. 
Moreover, given a quantum state $|\psi\rangle$ associated to a particular memory pattern (i.e. a DFS), one can evaluate $C(\psi)$ passively, without disturbing the state $|\psi\rangle$.

We lastly mention that the same type of construction given in this work may of course be demonstrated through local unitary evolutions and single-qubit projective measurements alone. 
Though this is true, it misses perhaps the main point; a demonstration that engineered dissipation can be used as a resource for certain machine learning and classification tasks.
We have given a precise construction for implementing the tasks described above, where the engineered dissipation is essentially performing the measurements of the relevant observables, required for classifying quantum data according to Defs.~\ref{def2}, \ref{def3}, \ref{def4}.
Indeed, similar arguments can be made against \textit{any} type of dissipative based scheme.
In particular, it is known that the dissipative model of quantum computation is no more powerful than the unitary gate model, and that therefore the latter can efficiently  simulate the former \cite{jens-martin}.
Nevertheless, it is a conceptually interesting  observation that purely non-unitary dynamics can be used to carry out tasks relevant for machine learning.

This work provides a well defined and consistent starting point for generalizing certain aspects of classsical machine learning theory, of which numerous applications could benefit. We highlighted simple examples for demonstrative purposes, but hope it is clear that  the general protocols can be modified in a multitude of ways to be made applicable for different applications. It would be interesting for example to apply these techniques to learning quantum data in the PAC framework \cite{kearns-vazirani,Valiant:PAC,qpac-bshouty-jackson,qpac-optimal}.

\section{Acknowledgments}
We thank Evgeny Mozgunov for reading the paper and providing useful comments.
 P.Z. acknowledges partial support from the NSF award PHY-1819189.
 L.C.V. acknowledges partial support from the Air Force Research Laboratory award no. FA8750-18-1-004.

\bibliography{refs}

\begin{thebibliography}{35}%
\makeatletter
\providecommand \@ifxundefined [1]{%
 \@ifx{#1\undefined}
}%
\providecommand \@ifnum [1]{%
 \ifnum #1\expandafter \@firstoftwo
 \else \expandafter \@secondoftwo
 \fi
}%
\providecommand \@ifx [1]{%
 \ifx #1\expandafter \@firstoftwo
 \else \expandafter \@secondoftwo
 \fi
}%
\providecommand \natexlab [1]{#1}%
\providecommand \enquote  [1]{``#1''}%
\providecommand \bibnamefont  [1]{#1}%
\providecommand \bibfnamefont [1]{#1}%
\providecommand \citenamefont [1]{#1}%
\providecommand \href@noop [0]{\@secondoftwo}%
\providecommand \href [0]{\begingroup \@sanitize@url \@href}%
\providecommand \@href[1]{\@@startlink{#1}\@@href}%
\providecommand \@@href[1]{\endgroup#1\@@endlink}%
\providecommand \@sanitize@url [0]{\catcode `\\12\catcode `\$12\catcode
  `\&12\catcode `\#12\catcode `\^12\catcode `\_12\catcode `\%12\relax}%
\providecommand \@@startlink[1]{}%
\providecommand \@@endlink[0]{}%
\providecommand \url  [0]{\begingroup\@sanitize@url \@url }%
\providecommand \@url [1]{\endgroup\@href {#1}{\urlprefix }}%
\providecommand \urlprefix  [0]{URL }%
\providecommand \Eprint [0]{\href }%
\providecommand \doibase [0]{http://dx.doi.org/}%
\providecommand \selectlanguage [0]{\@gobble}%
\providecommand \bibinfo  [0]{\@secondoftwo}%
\providecommand \bibfield  [0]{\@secondoftwo}%
\providecommand \translation [1]{[#1]}%
\providecommand \BibitemOpen [0]{}%
\providecommand \bibitemStop [0]{}%
\providecommand \bibitemNoStop [0]{.\EOS\space}%
\providecommand \EOS [0]{\spacefactor3000\relax}%
\providecommand \BibitemShut  [1]{\csname bibitem#1\endcsname}%
\let\auto@bib@innerbib\@empty
\bibitem [{\citenamefont {Verstraete}\ \emph {et~al.}(2009)\citenamefont
  {Verstraete}, \citenamefont {Wolf},\ and\ \citenamefont
  {Ignacio~Cirac}}]{verstraete2009quantum}%
  \BibitemOpen
  \bibfield  {author} {\bibinfo {author} {\bibfnamefont {F.}~\bibnamefont
  {Verstraete}}, \bibinfo {author} {\bibfnamefont {M.~M.}\ \bibnamefont
  {Wolf}}, \ and\ \bibinfo {author} {\bibfnamefont {J.}~\bibnamefont
  {Ignacio~Cirac}},\ }\bibfield  {title} {\enquote {\bibinfo {title} {Quantum
  computation and quantum-state engineering driven by dissipation},}\ }\href
  {http://dx.doi.org/10.1038/nphys1342} {\bibfield  {journal} {\bibinfo
  {journal} {Nat. Phys.}\ }\textbf {\bibinfo {volume} {5}},\ \bibinfo {pages}
  {633} (\bibinfo {year} {2009})}\BibitemShut {NoStop}%
\bibitem [{\citenamefont {Kliesch}\ \emph {et~al.}(2011)\citenamefont
  {Kliesch}, \citenamefont {Barthel}, \citenamefont {Gogolin}, \citenamefont
  {Kastoryano},\ and\ \citenamefont {Eisert}}]{jens-martin}%
  \BibitemOpen
  \bibfield  {author} {\bibinfo {author} {\bibfnamefont {M.}~\bibnamefont
  {Kliesch}}, \bibinfo {author} {\bibfnamefont {T.}~\bibnamefont {Barthel}},
  \bibinfo {author} {\bibfnamefont {C.}~\bibnamefont {Gogolin}}, \bibinfo
  {author} {\bibfnamefont {M.}~\bibnamefont {Kastoryano}}, \ and\ \bibinfo
  {author} {\bibfnamefont {J.}~\bibnamefont {Eisert}},\ }\bibfield  {title}
  {\enquote {\bibinfo {title} {{Dissipative Quantum Church-Turing Theorem}},}\
  }\href {\doibase 10.1103/PhysRevLett.107.120501} {\bibfield  {journal}
  {\bibinfo  {journal} {Phys. Rev. Lett.}\ }\textbf {\bibinfo {volume} {107}},\
  \bibinfo {pages} {120501} (\bibinfo {year} {2011})}\BibitemShut {NoStop}%
\bibitem [{\citenamefont {Zanardi}\ \emph {et~al.}(2016)\citenamefont
  {Zanardi}, \citenamefont {Marshall},\ and\ \citenamefont
  {Campos~Venuti}}]{dissi_2nd_order:Zanardi2016}%
  \BibitemOpen
  \bibfield  {author} {\bibinfo {author} {\bibfnamefont {P.}~\bibnamefont
  {Zanardi}}, \bibinfo {author} {\bibfnamefont {J.}~\bibnamefont {Marshall}}, \
  and\ \bibinfo {author} {\bibfnamefont {L.}~\bibnamefont {Campos~Venuti}},\
  }\bibfield  {title} {\enquote {\bibinfo {title} {{Dissipative universal
  Lindbladian simulation}},}\ }\href {\doibase 10.1103/PhysRevA.93.022312}
  {\bibfield  {journal} {\bibinfo  {journal} {Phys. Rev. A}\ }\textbf {\bibinfo
  {volume} {93}},\ \bibinfo {pages} {022312} (\bibinfo {year}
  {2016})}\BibitemShut {NoStop}%
\bibitem [{\citenamefont {Marshall}\ \emph {et~al.}(2016)\citenamefont
  {Marshall}, \citenamefont {Campos~Venuti},\ and\ \citenamefont
  {Zanardi}}]{DGM}%
  \BibitemOpen
  \bibfield  {author} {\bibinfo {author} {\bibfnamefont {J.}~\bibnamefont
  {Marshall}}, \bibinfo {author} {\bibfnamefont {L.}~\bibnamefont
  {Campos~Venuti}}, \ and\ \bibinfo {author} {\bibfnamefont {P.}~\bibnamefont
  {Zanardi}},\ }\bibfield  {title} {\enquote {\bibinfo {title} {Modular
  quantum-information processing by dissipation},}\ }\href {\doibase
  10.1103/PhysRevA.94.052339} {\bibfield  {journal} {\bibinfo  {journal} {Phys.
  Rev. A}\ }\textbf {\bibinfo {volume} {94}},\ \bibinfo {pages} {052339}
  (\bibinfo {year} {2016})}\BibitemShut {NoStop}%
\bibitem [{\citenamefont {Kraus}\ \emph {et~al.}(2008)\citenamefont {Kraus},
  \citenamefont {B{\"u}chler}, \citenamefont {Diehl}, \citenamefont {Kantian},
  \citenamefont {Micheli},\ and\ \citenamefont {Zoller}}]{kraus-prep}%
  \BibitemOpen
  \bibfield  {author} {\bibinfo {author} {\bibfnamefont {B.}~\bibnamefont
  {Kraus}}, \bibinfo {author} {\bibfnamefont {H.~P.}\ \bibnamefont
  {B{\"u}chler}}, \bibinfo {author} {\bibfnamefont {S.}~\bibnamefont {Diehl}},
  \bibinfo {author} {\bibfnamefont {A.}~\bibnamefont {Kantian}}, \bibinfo
  {author} {\bibfnamefont {A.}~\bibnamefont {Micheli}}, \ and\ \bibinfo
  {author} {\bibfnamefont {P.}~\bibnamefont {Zoller}},\ }\bibfield  {title}
  {\enquote {\bibinfo {title} {{Preparation of entangled states by quantum
  Markov processes}},}\ }\href
  {http://link.aps.org/doi/10.1103/PhysRevA.78.042307} {\bibfield  {journal}
  {\bibinfo  {journal} {Phys. Rev. A}\ }\textbf {\bibinfo {volume} {78}},\
  \bibinfo {pages} {042307} (\bibinfo {year} {2008})}\BibitemShut {NoStop}%
\bibitem [{\citenamefont {Bernstein}\ and\ \citenamefont
  {Vazirani}(1993)}]{Bernstein-Vazirani-learningTheory}%
  \BibitemOpen
  \bibfield  {author} {\bibinfo {author} {\bibfnamefont {E.}~\bibnamefont
  {Bernstein}}\ and\ \bibinfo {author} {\bibfnamefont {U.}~\bibnamefont
  {Vazirani}},\ }\bibfield  {title} {\enquote {\bibinfo {title} {{Quantum
  Complexity Theory}},}\ }in\ \href {\doibase 10.1145/167088.167097} {\emph
  {\bibinfo {booktitle} {Proceedings of the Twenty-fifth Annual ACM Symposium
  on Theory of Computing}}},\ \bibinfo {series and number} {STOC '93}\
  (\bibinfo  {publisher} {ACM},\ \bibinfo {address} {New York, NY, USA},\
  \bibinfo {year} {1993})\ pp.\ \bibinfo {pages} {11--20}\BibitemShut {NoStop}%
\bibitem [{\citenamefont {Harrow}\ \emph {et~al.}(2009)\citenamefont {Harrow},
  \citenamefont {Hassidim},\ and\ \citenamefont {Lloyd}}]{HHL}%
  \BibitemOpen
  \bibfield  {author} {\bibinfo {author} {\bibfnamefont {A.~W.}\ \bibnamefont
  {Harrow}}, \bibinfo {author} {\bibfnamefont {A.}~\bibnamefont {Hassidim}}, \
  and\ \bibinfo {author} {\bibfnamefont {S.}~\bibnamefont {Lloyd}},\ }\bibfield
   {title} {\enquote {\bibinfo {title} {{Quantum Algorithm for Linear Systems
  of Equations}},}\ }\href {\doibase 10.1103/PhysRevLett.103.150502} {\bibfield
   {journal} {\bibinfo  {journal} {Phys. Rev. Lett.}\ }\textbf {\bibinfo
  {volume} {103}},\ \bibinfo {pages} {150502} (\bibinfo {year}
  {2009})}\BibitemShut {NoStop}%
\bibitem [{\citenamefont {Lloyd}\ \emph {et~al.}(2016)\citenamefont {Lloyd},
  \citenamefont {Garnerone},\ and\ \citenamefont
  {Zanardi}}]{top-data-analysis}%
  \BibitemOpen
  \bibfield  {author} {\bibinfo {author} {\bibfnamefont {S.}~\bibnamefont
  {Lloyd}}, \bibinfo {author} {\bibfnamefont {S.}~\bibnamefont {Garnerone}}, \
  and\ \bibinfo {author} {\bibfnamefont {P.}~\bibnamefont {Zanardi}},\
  }\bibfield  {title} {\enquote {\bibinfo {title} {{Quantum algorithms for
  topological and geometric analysis of data}},}\ }\href
  {http://dx.doi.org/10.1038/ncomms10138} {\bibfield  {journal} {\bibinfo
  {journal} {Nat. Commun.}\ }\textbf {\bibinfo {volume} {7}},\ \bibinfo {pages}
  {10138} (\bibinfo {year} {2016})}\BibitemShut {NoStop}%
\bibitem [{\citenamefont {Rebentrost}\ \emph {et~al.}(2018)\citenamefont
  {Rebentrost}, \citenamefont {Bromley}, \citenamefont {Weedbrook},\ and\
  \citenamefont {Lloyd}}]{q-hopfield}%
  \BibitemOpen
  \bibfield  {author} {\bibinfo {author} {\bibfnamefont {P.}~\bibnamefont
  {Rebentrost}}, \bibinfo {author} {\bibfnamefont {T.~R.}\ \bibnamefont
  {Bromley}}, \bibinfo {author} {\bibfnamefont {C.}~\bibnamefont {Weedbrook}},
  \ and\ \bibinfo {author} {\bibfnamefont {S.}~\bibnamefont {Lloyd}},\
  }\bibfield  {title} {\enquote {\bibinfo {title} {{Quantum Hopfield neural
  network}},}\ }\href {\doibase 10.1103/PhysRevA.98.042308} {\bibfield
  {journal} {\bibinfo  {journal} {Phys. Rev. A}\ }\textbf {\bibinfo {volume}
  {98}},\ \bibinfo {pages} {042308} (\bibinfo {year} {2018})}\BibitemShut
  {NoStop}%
\bibitem [{\citenamefont {{A. Monras, A. Beige, and K.
  Wiesner}}(2010)}]{monras-hmm}%
  \BibitemOpen
  \bibfield  {author} {\bibinfo {author} {\bibnamefont {{A. Monras, A. Beige,
  and K. Wiesner}}},\ }\bibfield  {title} {\enquote {\bibinfo {title} {{Hidden
  Quantum Markov Models and non-adaptive read-out of many-body states}},}\
  }\href {https://arxiv.org/abs/1002.2337} {\bibfield  {journal} {\bibinfo
  {journal} {arXiv:1002.2337}\ } (\bibinfo {year} {2010})}\BibitemShut
  {NoStop}%
\bibitem [{\citenamefont {Schuld}\ \emph
  {et~al.}(2014{\natexlab{a}})\citenamefont {Schuld}, \citenamefont
  {Sinayskiy},\ and\ \citenamefont {Petruccione}}]{schuld-q-walk}%
  \BibitemOpen
  \bibfield  {author} {\bibinfo {author} {\bibfnamefont {M.}~\bibnamefont
  {Schuld}}, \bibinfo {author} {\bibfnamefont {I.}~\bibnamefont {Sinayskiy}}, \
  and\ \bibinfo {author} {\bibfnamefont {F.}~\bibnamefont {Petruccione}},\
  }\bibfield  {title} {\enquote {\bibinfo {title} {{Quantum walks on graphs
  representing the firing patterns of a quantum neural network}},}\ }\href
  {\doibase 10.1103/PhysRevA.89.032333} {\bibfield  {journal} {\bibinfo
  {journal} {Phys. Rev. A}\ }\textbf {\bibinfo {volume} {89}},\ \bibinfo
  {pages} {032333} (\bibinfo {year} {2014}{\natexlab{a}})}\BibitemShut
  {NoStop}%
\bibitem [{\citenamefont {Schuld}\ \emph
  {et~al.}(2014{\natexlab{b}})\citenamefont {Schuld}, \citenamefont
  {Sinayskiy},\ and\ \citenamefont {Petruccione}}]{quest_qnn}%
  \BibitemOpen
  \bibfield  {author} {\bibinfo {author} {\bibfnamefont {M.}~\bibnamefont
  {Schuld}}, \bibinfo {author} {\bibfnamefont {I.}~\bibnamefont {Sinayskiy}}, \
  and\ \bibinfo {author} {\bibfnamefont {F.}~\bibnamefont {Petruccione}},\
  }\bibfield  {title} {\enquote {\bibinfo {title} {{The quest for a Quantum
  Neural Network}},}\ }\href {\doibase 10.1007/s11128-014-0809-8} {\bibfield
  {journal} {\bibinfo  {journal} {{Quantum Inf. Process.}}\ }\textbf {\bibinfo
  {volume} {13}},\ \bibinfo {pages} {2567} (\bibinfo {year}
  {2014}{\natexlab{b}})}\BibitemShut {NoStop}%
\bibitem [{\citenamefont {Rotondo}\ \emph {et~al.}(2018)\citenamefont
  {Rotondo}, \citenamefont {Marcuzzi}, \citenamefont {Garrahan}, \citenamefont
  {Lesanovsky},\ and\ \citenamefont {M{\"u}ller}}]{open-hopfield}%
  \BibitemOpen
  \bibfield  {author} {\bibinfo {author} {\bibfnamefont {P.}~\bibnamefont
  {Rotondo}}, \bibinfo {author} {\bibfnamefont {M.}~\bibnamefont {Marcuzzi}},
  \bibinfo {author} {\bibfnamefont {J.~P.}\ \bibnamefont {Garrahan}}, \bibinfo
  {author} {\bibfnamefont {I.}~\bibnamefont {Lesanovsky}}, \ and\ \bibinfo
  {author} {\bibfnamefont {M.}~\bibnamefont {M{\"u}ller}},\ }\bibfield  {title}
  {\enquote {\bibinfo {title} {{Open quantum generalisation of Hopfield neural
  networks}},}\ }\href {http://stacks.iop.org/1751-8121/51/i=11/a=115301}
  {\bibfield  {journal} {\bibinfo  {journal} {J. Phys. A}\ }\textbf {\bibinfo
  {volume} {51}},\ \bibinfo {pages} {115301} (\bibinfo {year}
  {2018})}\BibitemShut {NoStop}%
\bibitem [{\citenamefont {{J.-ichi Inoue}}(2011)}]{q-hopfield2}%
  \BibitemOpen
  \bibfield  {author} {\bibinfo {author} {\bibnamefont {{J.-ichi Inoue}}},\
  }\bibfield  {title} {\enquote {\bibinfo {title} {{Pattern-recalling processes
  in quantum Hopfield networks far from saturation}},}\ }\href
  {http://stacks.iop.org/1742-6596/297/i=1/a=012012} {\bibfield  {journal}
  {\bibinfo  {journal} {J. of Phys.: Conference Series}\ }\textbf {\bibinfo
  {volume} {297}},\ \bibinfo {pages} {012012} (\bibinfo {year}
  {2011})}\BibitemShut {NoStop}%
\bibitem [{\citenamefont {Kearns}\ and\ \citenamefont
  {Vazirani}(1994)}]{kearns-vazirani}%
  \BibitemOpen
  \bibfield  {author} {\bibinfo {author} {\bibfnamefont {M.~J.}\ \bibnamefont
  {Kearns}}\ and\ \bibinfo {author} {\bibfnamefont {U.~V.}\ \bibnamefont
  {Vazirani}},\ }\href
  {https://mitpress.mit.edu/books/introduction-computational-learning-theory}
  {\emph {\bibinfo {title} {{An Introduction to Computational Learning
  Theory}}}}\ (\bibinfo  {publisher} {MIT Press},\ \bibinfo {year}
  {1994})\BibitemShut {NoStop}%
\bibitem [{\citenamefont {Wang}\ and\ \citenamefont
  {Gertler}(2018)}]{dissi-state-transfer}%
  \BibitemOpen
  \bibfield  {author} {\bibinfo {author} {\bibfnamefont {C.}~\bibnamefont
  {Wang}}\ and\ \bibinfo {author} {\bibfnamefont {J.~M.}\ \bibnamefont
  {Gertler}},\ }\bibfield  {title} {\enquote {\bibinfo {title} {{Directional
  Transfer of Quantum Information by Dissipation Engineering}},}\ }\href
  {https://arxiv.org/abs/1809.03571} {\bibfield  {journal} {\bibinfo  {journal}
  {arXiv:1809.03571}\ } (\bibinfo {year} {2018})}\BibitemShut {NoStop}%
\bibitem [{\citenamefont {Zanardi}\ and\ \citenamefont
  {Rasetti}(1997)}]{Zanardi:97c}%
  \BibitemOpen
  \bibfield  {author} {\bibinfo {author} {\bibfnamefont {P.}~\bibnamefont
  {Zanardi}}\ and\ \bibinfo {author} {\bibfnamefont {M.}~\bibnamefont
  {Rasetti}},\ }\bibfield  {title} {\enquote {\bibinfo {title} {Noiseless
  quantum codes},}\ }\href
  {http://link.aps.org/doi/10.1103/PhysRevLett.79.3306} {\bibfield  {journal}
  {\bibinfo  {journal} {Phys. Rev. Lett.}\ }\textbf {\bibinfo {volume} {79}},\
  \bibinfo {pages} {3306} (\bibinfo {year} {1997})}\BibitemShut {NoStop}%
\bibitem [{\citenamefont {Lidar}\ \emph {et~al.}(1998)\citenamefont {Lidar},
  \citenamefont {Chuang},\ and\ \citenamefont {Whaley}}]{Lidar:1998fk}%
  \BibitemOpen
  \bibfield  {author} {\bibinfo {author} {\bibfnamefont {D.~A.}\ \bibnamefont
  {Lidar}}, \bibinfo {author} {\bibfnamefont {I.~L.}\ \bibnamefont {Chuang}}, \
  and\ \bibinfo {author} {\bibfnamefont {K.~B.}\ \bibnamefont {Whaley}},\
  }\bibfield  {title} {\enquote {\bibinfo {title} {Decoherence-free subspaces
  for quantum computation},}\ }\href
  {http://link.aps.org/doi/10.1103/PhysRevLett.81.2594} {\bibfield  {journal}
  {\bibinfo  {journal} {Phys. Rev. Lett.}\ }\textbf {\bibinfo {volume} {81}},\
  \bibinfo {pages} {2594} (\bibinfo {year} {1998})}\BibitemShut {NoStop}%
\bibitem [{\citenamefont {Valiant}(1984)}]{Valiant:PAC}%
  \BibitemOpen
  \bibfield  {author} {\bibinfo {author} {\bibfnamefont {L.~G.}\ \bibnamefont
  {Valiant}},\ }\bibfield  {title} {\enquote {\bibinfo {title} {{A Theory of
  the Learnable}},}\ }\href {\doibase 10.1145/1968.1972} {\bibfield  {journal}
  {\bibinfo  {journal} {Commun. ACM}\ }\textbf {\bibinfo {volume} {27}},\
  \bibinfo {pages} {1134} (\bibinfo {year} {1984})}\BibitemShut {NoStop}%
\bibitem [{\citenamefont {Bshouty}\ and\ \citenamefont
  {Jackson}(1999)}]{qpac-bshouty-jackson}%
  \BibitemOpen
  \bibfield  {author} {\bibinfo {author} {\bibfnamefont {N.~H.}\ \bibnamefont
  {Bshouty}}\ and\ \bibinfo {author} {\bibfnamefont {J.~C.}\ \bibnamefont
  {Jackson}},\ }\bibfield  {title} {\enquote {\bibinfo {title} {{Learning DNF
  over the Uniform Distribution Using a Quantum Example Oracle}},}\ }\href
  {https://doi.org/10.1137/S0097539795293123} {\bibfield  {journal} {\bibinfo
  {journal} {SIAM J. Comput.}\ }\textbf {\bibinfo {volume} {28}},\ \bibinfo
  {pages} {1136} (\bibinfo {year} {1999})}\BibitemShut {NoStop}%
\bibitem [{\citenamefont {Lindblad}(1976)}]{Lindblad:76}%
  \BibitemOpen
  \bibfield  {author} {\bibinfo {author} {\bibfnamefont {G.}~\bibnamefont
  {Lindblad}},\ }\bibfield  {title} {\enquote {\bibinfo {title} {{On the
  Generators of Quantum Dynamical Semigroups}},}\ }\href {\doibase
  10.1007/BF01608499} {\bibfield  {journal} {\bibinfo  {journal} {Comm. Math.
  Phys.}\ }\textbf {\bibinfo {volume} {48}},\ \bibinfo {pages} {119} (\bibinfo
  {year} {1976})}\BibitemShut {NoStop}%
\bibitem [{\citenamefont {{V. Gorini, A. Kossakowski and E.C.G
  Sudarshan}}(1976)}]{Gorini:76}%
  \BibitemOpen
  \bibfield  {author} {\bibinfo {author} {\bibnamefont {{V. Gorini, A.
  Kossakowski and E.C.G Sudarshan}}},\ }\bibfield  {title} {\enquote {\bibinfo
  {title} {{Completely positive dynamical semigroups of N-level systems}},}\
  }\href {https://doi.org/10.1063/1.522979} {\bibfield  {journal} {\bibinfo
  {journal} {J. Math. Phys.}\ }\textbf {\bibinfo {volume} {17}},\ \bibinfo
  {pages} {821} (\bibinfo {year} {1976})}\BibitemShut {NoStop}%
\bibitem [{Note1()}]{Note1}%
  \BibitemOpen
  \bibinfo {note} {{$\sigma ^- = |0\delimiter "526930B \delimiter "426830A 1|$
  is defined in the eigen-basis of $\sigma ^z = |1\delimiter "526930B
  \delimiter "426830A 1 | - |0\delimiter "526930B \delimiter "426830A
  0|$.}}\BibitemShut {Stop}%
\bibitem [{Note2()}]{Note2}%
  \BibitemOpen
  \bibinfo {note} {{We wish to find $t$ such that $(1-e^{-\gamma t})^N >
  1-\epsilon $. Noting that $e^{-\epsilon } > 1-\epsilon $, we can instead
  bound $(1-e^{-\gamma t})^N > e^{-\epsilon }$. Using the identity $-\protect
  \qopname \relax o{log}(1-x) < \protect \frac {x}{1-x}$ with $x=e^{-\gamma t}$
  gives $\protect \frac {Ne^{-\gamma t}}{1-e^{-\gamma t}} < \epsilon $ which
  can be rearranged to give the result in the main text, $t > O(\protect \frac
  {1}{\gamma }\protect \qopname \relax o{log}\protect \frac {N}{\epsilon
  })$.}}\BibitemShut {Stop}%
\bibitem [{Note3()}]{Note3}%
  \BibitemOpen
  \bibinfo {note} {{Here $\delimiter "026B30D \cdot \delimiter "026B30D _1$ is
  the trace norm: $\delimiter "026B30D X\delimiter "026B30D _1 = \protect
  \mathrm {Tr}|X|$ with $|X|=\protect \sqrt {X^\dagger X}$.}}\BibitemShut
  {Stop}%
\bibitem [{\citenamefont {Zadah}(1965)}]{fuzzy-logic}%
  \BibitemOpen
  \bibfield  {author} {\bibinfo {author} {\bibfnamefont {L.~A.}\ \bibnamefont
  {Zadah}},\ }\bibfield  {title} {\enquote {\bibinfo {title} {Fuzzy sets},}\
  }\href {https://doi.org/10.1016/S0019-9958(65)90241-X} {\bibfield  {journal}
  {\bibinfo  {journal} {Information and Control}\ }\textbf {\bibinfo {volume}
  {8}},\ \bibinfo {pages} {338} (\bibinfo {year} {1965})}\BibitemShut {NoStop}%
\bibitem [{\citenamefont {Russel}\ and\ \citenamefont
  {Norvig}(2010)}]{aima-book}%
  \BibitemOpen
  \bibfield  {author} {\bibinfo {author} {\bibfnamefont {S.}~\bibnamefont
  {Russel}}\ and\ \bibinfo {author} {\bibfnamefont {P.}~\bibnamefont
  {Norvig}},\ }\href
  {https://www.pearson.com/us/higher-education/program/Russell-Artificial-Intelligence-A-Modern-Approach-3rd-Edition/PGM156683.html?tab=overview}
  {\emph {\bibinfo {title} {Artificial Intelligence: A Modern Approach}}},\
  \bibinfo {edition} {3rd}\ ed.\ (\bibinfo  {publisher} {Pearson},\ \bibinfo
  {year} {2010})\BibitemShut {NoStop}%
\bibitem [{\citenamefont {A{\"i}meur}\ \emph {et~al.}(2006)\citenamefont
  {A{\"i}meur}, \citenamefont {Brassard},\ and\ \citenamefont
  {Gambs}}]{ML-Q_world}%
  \BibitemOpen
  \bibfield  {author} {\bibinfo {author} {\bibfnamefont {E.}~\bibnamefont
  {A{\"i}meur}}, \bibinfo {author} {\bibfnamefont {G.}~\bibnamefont
  {Brassard}}, \ and\ \bibinfo {author} {\bibfnamefont {S.}~\bibnamefont
  {Gambs}},\ }\bibfield  {title} {\enquote {\bibinfo {title} {{Machine Learning
  in a Quantum World}},}\ }in\ \href
  {https://link.springer.com/chapter/10.1007/11766247_37} {\emph {\bibinfo
  {booktitle} {Advances in Artificial Intelligence}}}\ (\bibinfo  {publisher}
  {Springer Berlin Heidelberg},\ \bibinfo {address} {Berlin, Heidelberg},\
  \bibinfo {year} {2006})\ pp.\ \bibinfo {pages} {431--442}\BibitemShut
  {NoStop}%
\bibitem [{\citenamefont {A{\"i}meur}\ \emph {et~al.}(2013)\citenamefont
  {A{\"i}meur}, \citenamefont {Brassard},\ and\ \citenamefont
  {Gambs}}]{unsupervised-speedup}%
  \BibitemOpen
  \bibfield  {author} {\bibinfo {author} {\bibfnamefont {E.}~\bibnamefont
  {A{\"i}meur}}, \bibinfo {author} {\bibfnamefont {G.}~\bibnamefont
  {Brassard}}, \ and\ \bibinfo {author} {\bibfnamefont {S.}~\bibnamefont
  {Gambs}},\ }\bibfield  {title} {\enquote {\bibinfo {title} {Quantum speed-up
  for unsupervised learning},}\ }\href {\doibase 10.1007/s10994-012-5316-5}
  {\bibfield  {journal} {\bibinfo  {journal} {Machine Learning}\ }\textbf
  {\bibinfo {volume} {90}},\ \bibinfo {pages} {261} (\bibinfo {year}
  {2013})}\BibitemShut {NoStop}%
\bibitem [{\citenamefont {Simon}(1994)}]{simon:94}%
  \BibitemOpen
  \bibfield  {author} {\bibinfo {author} {\bibfnamefont {D.~R.}\ \bibnamefont
  {Simon}},\ }\bibfield  {title} {\enquote {\bibinfo {title} {{On the power of
  quantum computation}},}\ }in\ \href
  {https://doi.org/10.1109/SFCS.1994.365701} {\emph {\bibinfo {booktitle}
  {Proc. 35th Annual Symposium on Foundations of Computer Science}}}\ (\bibinfo
  {year} {1994})\ pp.\ \bibinfo {pages} {116--123}\BibitemShut {NoStop}%
\bibitem [{\citenamefont {Servedio}(2001)}]{servedio}%
  \BibitemOpen
  \bibfield  {author} {\bibinfo {author} {\bibfnamefont {R.~A.}\ \bibnamefont
  {Servedio}},\ }\bibfield  {title} {\enquote {\bibinfo {title} {{Separating
  Quantum and Classical Learning}},}\ }in\ \href
  {https://link.springer.com/chapter/10.1007/3-540-48224-5_86} {\emph {\bibinfo
  {booktitle} {Automata, Languages and Programming}}}\ (\bibinfo  {publisher}
  {Springer Berlin Heidelberg},\ \bibinfo {address} {Berlin, Heidelberg},\
  \bibinfo {year} {2001})\ pp.\ \bibinfo {pages} {1065--1080}\BibitemShut
  {NoStop}%
\bibitem [{Note4()}]{Note4}%
  \BibitemOpen
  \bibinfo {note} {The same arguments as in Appendix \ref {app:bound} apply to
  Prop.~\ref {prop3} regarding the evolution time as a function of error
  $\epsilon $, which is therefore still $t\sim O(\protect \frac {n}{\gamma
  }+\protect \frac {1}{\gamma }\protect \qopname \relax o{log}\protect \frac
  {N}{\epsilon })$.}\BibitemShut {Stop}%
\bibitem [{Note5()}]{Note5}%
  \BibitemOpen
  \bibinfo {note} {As mentioned in point 4 following Prop.~\ref {prop1}, our
  framework can equivalently be used to evaluate both CNF and DNF
  formulas.}\BibitemShut {Stop}%
\bibitem [{\citenamefont {Zhang}\ \emph {et~al.}(2017)\citenamefont {Zhang},
  \citenamefont {Wagenbreth}, \citenamefont {Martin-Mayor},\ and\ \citenamefont
  {Hen}}]{fairInUnfair}%
  \BibitemOpen
  \bibfield  {author} {\bibinfo {author} {\bibfnamefont {B.~H.}\ \bibnamefont
  {Zhang}}, \bibinfo {author} {\bibfnamefont {G.}~\bibnamefont {Wagenbreth}},
  \bibinfo {author} {\bibfnamefont {V.}~\bibnamefont {Martin-Mayor}}, \ and\
  \bibinfo {author} {\bibfnamefont {I.}~\bibnamefont {Hen}},\ }\bibfield
  {title} {\enquote {\bibinfo {title} {{Advantages of Unfair Quantum
  Ground-State Sampling}},}\ }\href {\doibase 10.1038/s41598-017-01096-6}
  {\bibfield  {journal} {\bibinfo  {journal} {Sci. Rep.}\ }\textbf {\bibinfo
  {volume} {7}},\ \bibinfo {pages} {1044} (\bibinfo {year} {2017})}\BibitemShut
  {NoStop}%
\bibitem [{\citenamefont {Arunachalam}\ and\ \citenamefont
  {de~Wolf}(2017)}]{qpac-optimal}%
  \BibitemOpen
  \bibfield  {author} {\bibinfo {author} {\bibfnamefont {S.}~\bibnamefont
  {Arunachalam}}\ and\ \bibinfo {author} {\bibfnamefont {R.}~\bibnamefont
  {de~Wolf}},\ }\bibfield  {title} {\enquote {\bibinfo {title} {{Optimal
  Quantum Sample Complexity of Learning Algorithms}},}\ }in\ \href {\doibase
  10.4230/LIPIcs.CCC.2017.25} {\emph {\bibinfo {booktitle} {Proc. 32nd
  Computational Complexity Conference}}},\ \bibinfo {series and number} {CCC
  '17}\ (\bibinfo  {publisher} {Schloss Dagstuhl--Leibniz-Zentrum fuer
  Informatik},\ \bibinfo {address} {Germany},\ \bibinfo {year} {2017})\ pp.\
  \bibinfo {pages} {25:1--25:31}\BibitemShut {NoStop}%
\end{thebibliography}%

\appendix

\section{Proof of Eq.~(\ref{eq:inf-time}) \label{app:bound}}
Let $\rho_t^{(xy)} = \mathcal{E}_t(|x\rangle \langle y|\otimes |1\rangle \langle 1|^{\otimes N})$. By the triangle inequality and linearity of $\mathcal{E}_t$, we have $\|\rho_t - \rho_\infty\|_1 \le 2^{2n}\| \rho_t^{(xy)} - \rho_{\infty}^{(xy)}\|_1$. We set $\| \rho_t^{(xy)} - \rho_{\infty}^{(xy)}\|_1 < \epsilon/2^{2n}$ and directly compute this norm. There are two general cases i) $\bar{C}(x)=\bar{C}(y)$, ii) $\bar{C}(x) \neq \bar{C}(y)$.

In i), let us first consider the case where $\bar{C}_i(x) = 0 = \bar{C}_i(y), \forall i$. Then $\rho_\infty^{(xy)} = |x\rangle \langle y| \otimes |\bar{0}\rangle \langle \bar{0}|$ where $|\bar{0}\rangle \equiv |0\rangle^{\otimes N}$. Then, using Eq.~(\ref{eq:proof2}),
\begin{equation}
\begin{split}
    (\rho_t^{(xy)} - \rho_\infty^{(xy)})^\dagger (\rho_t^{(xy)} - \rho_\infty^{(xy)}) \\
    =\sum_{\bar{C} \in \{0,1\}^N} \alpha_{\bar{C}}^2 |y\rangle \langle y| \otimes |\bar{C}\rangle \langle \bar{C}|
    \end{split}
    \label{eq:norm1}
\end{equation}

with 
\begin{equation}
    \begin{split}
    & \alpha_{\bar{C} \neq \bar{0}} = e^{-N_{\bar{C}}^{(0)} \gamma t}(1-e^{-\gamma t})^{N-N_{\bar{C}}^{(0)} }\\
        & \alpha_{\bar{0}} = \sqrt{(1-e^{-\gamma t})^{2N} - 2(1-e^{-\gamma t})^N + 1} \\
        & = 1-(1-e^{-\gamma t})^N
    \end{split}
\end{equation}
where $N_{\bar{C}}^{(0)} $ is the number of $0$'s in the bit-string $\bar{C}$. The form of $a_{\bar{0}}$ comes from the cross terms in Eq.~(\ref{eq:norm1}).

Taking the matrix square root of Eq.~(\ref{eq:norm1}) (which is diagonal), followed by the trace, gives
\begin{equation}
\begin{split}
    \| \rho_t^{(xy)} - \rho_{\infty}^{(xy)}\|_1 =
    1-(1-e^{-\gamma t})^N +\\ \sum_{k=1}^N {N \choose k} e^{-k \gamma t}(1-e^{-\gamma t})^{N-k} 
    \end{split}
\end{equation}
which can be rearranged (by the Binomial theorem) to
\begin{equation}
    \| \rho_t^{(xy)} - \rho_{\infty}^{(xy)}\|_1 = 2(1 -  (1-e^{-\gamma t})^N).
    \label{eq:bound-simplified}
\end{equation}

Similarly as in the proof of Prop.~\ref{prop2}, to guarantee this is less than $\epsilon/2^{2n}$ requires $t > O(\frac{n}{\gamma}+\frac{1}{\gamma}\log \frac{N}{\epsilon})$ as required.

Note, if instead $\bar{C}(x)=\bar{C}(y) \neq \bar{0}$, the only difference is that in Eq.~(\ref{eq:bound-simplified}), $N$ is replaced by $N-H$ where $H$ is the Hamming weight of $\bar{C}(x)$ and $\bar{C}(y)$, and therefore the final conclusion remains unchanged.

For case ii), $\rho_{\infty}^{(xy)}=0$, and if there are $m$ clauses which differ between $x$ and $y$ (and all others evaluate to 0), one instead bounds
\begin{equation}
    e^{-\gamma t m/2}\left[\sum_{k=0}^{N-m}{N-m \choose k} e^{-\gamma k t} (1-e^{-\gamma t})^{N-m-k}\right] < \epsilon/2^{2n},
    \label{eq:bound-differ}
\end{equation}
which by  the Binomial theorem gives $t > O(\frac{1}{\gamma}\log \frac{2^n}{\epsilon})$.
As before, in the case where the are clauses $C_i(x)=C_i(y)=1$, one instead replaces $N$ in Eq.~(\ref{eq:bound-differ}) with $N-H$, where $H$ is the number of such clauses. The result is unchanged.

Overall then, to achieve $\|\rho_t - \rho_\infty \|_1 < \epsilon$,
one requires evolution time $t>O(\frac{n}{\gamma}+\frac{1}{\gamma}\log \frac{N}{\epsilon})$, i.e. Eq.~(\ref{eq:inf-time}).

\end{document}